%
\documentclass{llncs}
\usepackage{wrapfig,makeidx}  
\usepackage{mathpartir, amsmath, amssymb}  
\usepackage[all]{xy}
\usepackage{verbatim}
\usepackage{multirow}

\let\llncssubparagraph\subparagraph
\let\subparagraph\paragraph
\usepackage[compact]{titlesec}
 \usepackage[compact]{titlesec}
 \titlespacing{\section}{0pt}{1ex}{1ex}
 \titlespacing{\subsection}{0pt}{1ex}{0ex}
 \titlespacing{\subsubsection}{0pt}{0.5ex}{0ex}
\let\subparagraph\llncssubparagraph

\begin{document}

\title{B\"uchi Types for Infinite Traces and Liveness}
\titlerunning{B\"uchi Types for Infinite Traces and Liveness}
\author{Martin Hofmann\inst{1} \and Wei Chen\inst{2}}
\authorrunning{Martin Hofmann \and Wei Chen}
\institute{LMU Munich
\email{martin.hofmann@ifi.lmu.de}
\and
School of Informatics, University of Edinburgh
}

\maketitle

\begin{abstract}
We develop a new type and effect system based on B\"uchi automata to capture finite and infinite traces produced by programs in a small language which allows non-deterministic choices and infinite recursions. There are two key technical contributions: (a) an abstraction based on equivalence relations defined by the policy B\"uchi automata, the B\"uchi abstraction; (b) a novel type and effect system to correctly capture infinite traces. We show how the B\"uchi abstraction fits into the abstract interpretation framework and show
 soundness and completeness. 
\end{abstract}


\section{Introduction}
A great range of techniques and tools have been developed and studied
for the prediction of program behaviours without actually running the program  \cite{Emerson80,Clarke00,McMillan93,Nielson05,cousot,cousotj}. One of them, originating from type inference in
functional programming languages, is the type and
effect discipline \cite{Luc88}. As a refinement of type systems
in programming languages, types are annotated with information
characterizing dynamic behaviours of programs---{\it
  effects}. As a result, a well-typed program satisfies some properties
regarding its side-effects as well. This type-based technique has been used for all kinds of static analysis of programs, e.g. flow
analysis \cite{Mossin97}, dependency analysis \cite{Abadi99}, resource allocation analysis \cite{Thiemann98}, and
amortised analysis \cite{Hofmann13,Hofmann06}, etc. 
In particular, a type and effect system was developed by Grabowski et al.\ 
\cite{martin} to
verify that a particular programming guideline for secure
web-programming has been adhered to. Generalizing from this, 
one could  model a
programming guideline as a property of traces that a program might
have, where traces are sequences of events that are issued by a
certain instrumentation of the program with special event-issuing
operations. This instrumentation would be part of the formalised
guideline.  A finite state machine would then be used to specify the
set of acceptable traces. 
Most policies involve safety properties  which can be assessed by examining finite portions of
traces. In some cases, however, properties pertaining to liveness and
fairness \cite{Alpern87} can become relevant. For instance, a guideline could be that
calls to appropriate logging functions must be made again and again or
that event (sic!) handlers should not become stuck, e.g.\ in the Java
Swing framework.

This motivated us to investigate the possibility of using type systems
in this situation as well. Our aim is not to offer new algorithms for
deciding certain temporal properties or indeed to compete with the
existing methods which are numerous
\cite{Emerson80,Clarke00,McMillan93,DBLP:conf/concur/BouajjaniEM97},
but to extend the reach of type systems. Our solution goes, however,
beyond a simple reformulation of an existing algorithm; the abstract
domain based B\"uchi automata may well be useful in its on right and
is an original contribution of this work.
 
For the sake of simplicity, we introduce and study a small
language consisting of recursive first-order  procedures 
 and non-deterministic choices. The language explicitly allows 
 infinite recursions. In this language, except
for primitive procedures which have events as arguments, other
procedures have no inputs. We define {\it trace semantics} which
formalize finite and infinite traces generated by programs in this
language. We also remark that in essence our language is the same as
the \emph{pushdown systems} that have been studied in detail by a number of authors 
\cite{DBLP:conf/cav/Walukiewicz96,DBLP:conf/concur/BouajjaniEM97,schwoon}. Similar trace semantics were
also studied by the Cousots \cite{CousotC92} as a specific case of abstract
interpretation.

Once a satisfactory type system for such a simple language has been
found, it can be combined with known techniques \cite{comlan,attapl}
to scale to a type system for a large fragment of Java or similar
languages. Alternatively, one can use our simple language as a target
of a preliminary abstraction step. 

Then, we develop the  B\"uchi type and effect system to capture correctly finite
and infinite traces. Since branching is non-deterministic in our
language, we can even establish a completeness result. Completeness,
of course, will be lost, once we re-introduce data-dependent
branching.

As a demonstration, we extend the B\"uchi type and effect system for this small language to 
a B\"uchi type and effect system for Featherweight Java with field update \cite{comlan}. 

The main technical contribution of this paper is the design of an
abstract domain in the sense of abstract interpretation
\cite{cousot,cousotj} based on B\"uchi automata or rather a mild
extension of those allowing infinite as well as finite words.  The
proofs of soundness and completeness of the type system are based on
clear-cut lattice-theoretic properties of this abstraction.

As in the finitary case, this \emph{B\"{u}chi abstraction} is based on
equivalence relations on finite words generated by the policy
B\"{u}chi automaton. Abstracted effects are no longer sets of such
equivalence classes, but rather sets of pairs of the form $(U,V)$ with
$U$, $V$ classes and representing the infinitary language
$UV^\omega$. While such pairs appear in B\"uchi's original
complementation construction for B\"uchi automata \cite{buc62} and have
subsequently been used by a number of authors
\cite{vardi,martin:c35,jones}, they have never been used in the context of
type systems and abstract interpretation.

\subsection{Related work}
As already mentioned, our language is equivalent to pushdown systems
for which model-checking of temporal properties has been extensively
studied
\cite{DBLP:conf/cav/Walukiewicz96,schwoon,schwoon2,DBLP:journals/tcs/BurkartS99}. Pushdown
systems, on the other hand, are special cases of higher-order recursion
systems introduced by Knapik et
al. \cite{DBLP:conf/fossacs/KnapikNU02} and extensively studied by
Ong and his collaborators,
e.g. \cite{DBLP:conf/tlca/AehligMO05,DBLP:conf/lics/KobayashiO09}.

The latter work \cite{DBLP:conf/lics/KobayashiO09} also casts model
checking into the form of a type system and is thus quite closely
related to our result. More precisely, from an alternating parity
automaton a type system for higher-order recursion schemes is derived
such that a scheme is typable iff its evaluation tree would be
accepted by the automaton. In this way, in particular all mu-calculus
definable properties of the evaluation tree become
expressible. Regarding trace languages as opposed to tree properties
alternating parity automata are equivalent to B\"uchi
automata since both capture the ($\omega-$)regular languages.  Thus for
the trace language of interest here the system from loc.cit.\ is equal in
expressive power to our type system. 

The difference is that Kobayashi and Ong's system has a much more
semantic flavour not unlike the intersection type systems used to
characterise strong normalisation. More concretely, the
well-formedness condition for recursions in that system requires the
solution of a parity game whose size is proportional to the size of
the program (number of function symbols to be precise) which is known
to be equivalent to model checking trees against mu-calculus formulas.

Our type system, on the other hand, deviates from the
standard type systems used in programming and program analysis only
very slightly; instead of the usual recursion rule (which is clearly
unsound in the context of liveness) we use a rule involving a type
variable. No further semantic conditions need to be checked once of course 
the given B\"uchi automaton has been analysed and preprocessed. 

We can also mention that our method and approach are rather
different. While \emph{loc.cit.} uses games and automata we rely on
the recently re-popularised \cite{DBLP:journals/corr/abs-1110-6183,DBLP:conf/concur/AbdullaCCHHMV11} Ramseyian approach to the study of
$\omega$-regular language and automata.

Another recent work on the use of types for properties of infinite
traces is \cite{alan.jeffrey:dblp:c33} which embeds formulas of Linear
Temporal Logic into types in the context of functional reactive
programming. This work, however, relies on an encoding of linear
temporal logic in first-order logic with integers, e.g., one models
``the event $x$ occurs infinitely often'' as a formula like $\forall
i\exists j.x_j$ where $x_j$ refers to that the event $x$ issues at the
time $j$. Dependent types are being used to turn this into a type
system, but questions of inference and decidability are not
considered.

The discussed works 
\cite{DBLP:conf/lics/KobayashiO09,alan.jeffrey:dblp:c33} are---to our knowledge---the only
attempts at extending the range of typing beyond safety properties.
\subsection{Outline}
In the next section we define a simple first-order language with
parameterless recursive procedures and non-deterministic branching
(meant, of course, to abstract ordinary conditionals). An alphabet of
events $\Sigma$ is assumed and for each event $a\in\Sigma$ a primitive
procedure $o(a)$ is available that outputs $a$ and has no effect on
control flow or state. Programs are toplevel mutually recursive
definitions of parameterless procedures comparable to the C-language. Given a program, any expression then admits a set of finite and infinite words over $\Sigma$---the traces of terminating and nonterminating computations of the program. 
We distinguish finite traces stemming from terminating execution from finite traces stemming from nonterminating but ``unproductive'' executions. Thus, for every expression $e$ (relative to a well-formed program) and trace $w\in\Sigma^{\leq\omega} = \Sigma^*\cup\Sigma^\omega$  we define a judgement $e\Downarrow w$ meaning that $e$ admits a terminating execution with trace $w$ (necessarily $w\in\Sigma^*$ then) and another judgement $e\Uparrow w$ meaning that $e$ admits a nonterminating computation with trace $w$. In this case both $w\in\Sigma^*$ and $w\in\Sigma^\omega$ are possible. Formally, 
this is defined by introducing $\checkmark$ events that are repeatedly issued so that any nonterminating computation will have an infinite trace \emph{with} $\checkmark$ events. The official trace semantics ($e\downarrow w$ and $e\uparrow w$) is then defined by discarding these $\checkmark$ events.  We then discuss alternative ways for defining the trace semantics and emphasize that it is merely meant to formalize the intuitively clear notion of event trace occurring during a  computation.

In Section~\ref{sec-e} we then define a type-and-effect system whose
effects are pairs $(U,V)$ with $U\subseteq\Sigma^*$ and $V\subseteq
\Sigma^{\leq\omega}$. Semantically, an expression has effect $(U,V)$
  when $e\downarrow w$ implies $w\in U$ and $e\uparrow w$
  implies $w\in V$. The typing rules are given in
  Figure~\ref{fig-e}. We notice here that for the $U$-part
  (terminating computation) the typing rules are as usual; one
  ``guesses'' a type for a recursively defined procedure and justifies
  it for its body. The rule for the nonterminating ``$V$-part'' is
  different. One assumes a type (and effect) variable for the
  recursive calls, typechecks the body and then takes the greatest
  fixpoint of the resulting type-and-effect equation. 

  With an ordinary recursive typing rule it would be possible to infer
  an effect like $(\emptyset,(a^*b)^\omega)$ (``infinitely often
  $b$'') for the program $m() = o(a);m()$ which is unsound.  We then
  establish soundness (Theorem~\ref{thm-a}) and completeness
  (Theorem~\ref{thm-ca}) for this type-and-effect system. In
  particular, this shows that the proposed handling of recursive
  definitions does indeed work.

 The type system is at this level, however, of limited use since the
 effects are infinitary objects. Therefore, in Section~\ref{sec-q} we
 introduce an abstraction of this type-and-effect system where effects are taken
 from a fixed finite set. This finite set is calculated from an a priori
 given B\"uchi automaton and effects still denote pairs of
 finite and possibly infinite languages, but no longer is any such
 pair denotable. 

Our main result Theorem~\ref{thm-cb} then asserts
 that if the set of all traces of an expression is accepted by the
 given B\"uchi automaton then this is provable in the abstracted type-and-effect
 system. So, no precision is lost in this sense. Of course, we also
 have an accompanying soundness theorem (Theorem~\ref{thm-c}) for the
 abstract type-and-effect system.

 These results can be modularly deduced
 from soundness and completeness for the infinitary type-and-effect system
 (Thms~\ref{thm-a} and \ref{thm-ca}) with the help of lattice-theoretic properties of the abstraction that are established in Section~\ref{ba}. 
In particular, we have  a Galois connection between the lattice of all
 languages and the lattice of language denotations in the abstract type
 system and all operations needed in the typing rules, in particular least and greatest fixpoints can be correctly rendered on the level of the abstractions. 

 The crucial building block is the ability to compute abstractions of
 greatest fixpoints needed for recursive definitions entirely on the
 level of the abstracted types. This requires the combination of a
 combinatorial lemma (Lemma \ref{krit}) with known covering properties
 (Lemma~\ref{lem-d}) of the abstractions which follow from Ramsey's
 theorem. Section~\ref{xx} and Section~\ref{ba} contain these
 lattice-theoretic results.  We consider the discovery of this
 abstract lattice obtained from a B\"uchi automaton an important
 result of independent interest.

 Section~\ref{bet} then contains the actual definition of the
 abstracted type-and-effect system and its soundness and completeness theorems
 which, as already mentioned, then are direct consequences of earlier
 results. Section~\ref{compl} discusses automatic type inference and its complexity. 

An Appendix contains several worked out examples that did not fit into
the main text and omitted proofs. We also sketch there, as a
demonstration, a combination of an existing region-based type and
effect system for Featherweight Java with field update \cite{comlan}
with B\"uchi types.

\section{Trace Semantics}
%
The syntax of expressions is given by 
$
e\ ::=\ o(a)\ |\ f\ |\ e_1\ ;\ e_2\ |\ e_1\ ?\ e_2
$
where $o(a)$ is the only primitive procedure which generates an event $a$ taken from a fixed alphabet $\Sigma$ of events and $f$ ranges over procedures defined by expressions. Parentheses are used to eliminate ambiguity. We assume that the operator $;$ is right-associative and has higher priority than the operator $?$. As an example, we can define procedures $f$ and $g$ as:
$f = o(b)\ ?\ o(a)\ ;\ g$ and
$g = f\ ;\ g\ ;\ (o(b)\ ?\ o(a))$.
Formally, thus a \emph{program} consists of a finite set of procedure identifiers $\mathcal{F}$ and for each $f\in\mathcal{F}$ an expression $e_f$ defining $f$ where calls to procedures from $\mathcal{F}$ are allowed and in particular, $f$ may occur recursively in $e_f$.

From now on, we fix such a program
$\mathcal{P}=(\mathcal{F},(e_f)_{f\in\mathcal{F}})$ and call an
expression $e$ \emph{well-formed} if it uses calls to procedures from
$\mathcal{F}$ only.

Since the operator $?$ is non-deterministic and non-primitive
procedures have no arguments, stacks and heaps are not needed at this
level of abstraction. 

Let $\Sigma^{\leq \omega}$ be the set of all finite and infinite
sequences generated from the set $\Sigma$ of primitive events. We call
an element $w$ in $\Sigma^{\leq \omega}$ a {\it trace}. Given
traces $w$ and $u$, we define the concatenation $w\cdot u$ as:
$wu$ if $w \in \Sigma^*$ and 
$w$ if $w \in \Sigma^\omega$
where $\Sigma^*$ and $\Sigma^\omega$ are respectively sets of all
finite and infinite sequences over $\Sigma$. So,
$\Sigma^{\leq\omega} = \Sigma^*\cup\Sigma^\omega$ and $\Sigma^* =
\Sigma^+\cup\{\epsilon\}$. As usual, we may write $wu$ instead of $w\cdot u$. 
We are concerned with finite prefixes of the trace generated by
a given  expression. We call them {\it observed
  traces}. Notice that all observed traces are in $\Sigma^*$. Let
$e_f$ be the definition (a well-formed expression) of $f$. The
observed trace semantics is given in Figure \ref{fig-a}.
{\small\begin{figure}[h]
{\scriptsize\[
\inferrule*
{ }
{o(a) \Downarrow a}
\quad
\inferrule*
{ }
{o(a) \Uparrow a}
\quad
\inferrule*
{ }
{e \Uparrow \epsilon}
\quad
\inferrule*
{e_f \Downarrow w}
{f \Downarrow w}
\quad
\inferrule*
{e_f \Uparrow w}
{f \Uparrow w}
\inferrule*
{e_1 \Downarrow w\qquad e_2 \Downarrow u}
{e_1\,;\, e_2 \Downarrow w \cdot u}
\quad
\inferrule*
{e_1 \Downarrow w\qquad e_2 \Uparrow u}
{e_1\,;\,e_2 \Uparrow w \cdot u}
\]\[
\quad
\inferrule*
{e_1 \Uparrow w}
{e_1\,;\,e_2 \Uparrow w}
\inferrule*
{e_1 \Downarrow w}
{e_1\,?\,e_2 \Downarrow w}
\quad
\inferrule*
{e_2 \Downarrow w}
{e_1\,?\,e_2 \Downarrow w}
\quad
\inferrule*
{e_1 \Uparrow w}
{e_1\,?\,e_2 \Uparrow w}
\quad
\inferrule*
{e_2 \Uparrow w}
{e_1\,?\,e_2 \Uparrow w}
\]}
\caption{The Observed Trace Semantics}
\label{fig-a}
\end{figure}
}
We write $e\Downarrow w$ to mean that the finite trace generated by $e$ is $w$. In particular, $e$ terminates. We write $e\Uparrow w$ to mean that $w$ is a finite prefix of the trace generated by $e$. Let the notation $u \preccurlyeq w$ denote that $u$ is a finite prefix of $w$. We have: 
if $e \Downarrow w$ or $e\Uparrow w$, then for all $u\preccurlyeq w$, $e \Uparrow u$.
We now turn to define infinite traces of non-terminating programs. Unfortunately, the observed trace semantics does not contain enough information for this. Let us consider the following definitions:
$f = o(a)$, 
$g = (o(a)\ ;\ h)\ ?\ o(a)$, and
$h = h$.
Notice that the observed traces of $f$ and $g$ are exactly the
same. 
However, the procedure $g$ has a path leading to an unproductive infinite recursion $h$ while $f$ is non-recursive. 
In order to fix this problem, let us introduce the extended set $\Sigma\uplus\{\checkmark\}$ of events and use $(\Sigma\uplus\{\checkmark\})^{\leq\omega}$ for the set of all {\it extended traces}. The \emph{observed extended trace semantics} is same as the observed trace semantics except for the rule for function application in which a $\checkmark$-event is automatically generated. 
That is,
$
\frac
{e_f \Uparrow w}
{f \Uparrow \checkmark\cdot w}
$. 
The specific symbol $\checkmark$ is added to the beginning of trace $w$ of $e_f$. By doing this, unproductive infinite recursions can be distinguished from productive cases by observed extended traces $\checkmark^*$.
%

For all observed extended traces $w$, let $\theta(w)$ denote the trace obtained from $w$ by removing all $\checkmark$s. 
Based on the observed extended trace semantics, we define trace semantics as follows.
\begin{definition}[Trace Semantics]
For all expressions $e$ and extended traces $w$ in $(\Sigma\uplus\{\checkmark\})^{\leq \omega}$,
\begin{eqnarray*}
& & e \downarrow w\ \equiv\ \exists w'\in(\Sigma\uplus \{\checkmark\})^*.\,e\Downarrow w' \wedge w = \theta(w');\\
& & e \uparrow w\ \equiv\ \exists w'\in(\Sigma\uplus \{\checkmark\})^\omega.\,(\forall u\preccurlyeq w'\,.\, e\Uparrow u)
\,\wedge\, w = \theta(w').
\end{eqnarray*}
We say $w$ is a trace of $e$ if $e\downarrow w$ or $e\uparrow w$.
\end{definition}
Notice that if $e\downarrow w$, then $w$ is in $\Sigma^*$ and all executions of $e$ terminate. If $e\uparrow w$, then $w$ is in $\Sigma^{\leq\omega}$ and all executions of $e$ do not terminate.
In our definition of trace semantics, the symbol $\checkmark$ is introduced to distinguish finite traces generated by terminating programs and non-terminating programs. When the trace semantics is well-defined, we remove all $\checkmark$s.  

We remark that this way of defining the semantics is one of several
possibilities; alternatives would consist of using a small step
operational semantics or a coinductive definition. For instance, Cousot et al 
\cite{CousotC92} define a generalization of structured operational
semantics (G$^\infty$SOS), is used to describe the finite and
infinite executions of programs. At the end of the day we need to
define the two judgements $e\downarrow w$ meaning that $e$
terminates with trace $w$ so, necessarily $w\in\Sigma^*$ and
$e\uparrow w$ meaning that $e$ does not terminate (runs forever)
and its trace is $w$. In this case, $w$ may either be an
infinite word ($w\in\Sigma^\omega$) or a finite word ($w\in
\Sigma^*$) in which case $e$'s evaluation gets stuck in an infinite
loop but $e$ does not output events during this loop.

An important fine point is that at our level of abstraction programs have a finite store which means that by K\"onig's lemma ``arbitrarily long'' and ``infinitely long'' coincide. In a language allowing the  nondeterministic selection of integers we could write a program that admits traces (outputting $a$s) of any finite length but not having an infinite trace. Then, our trace semantics would erroneously ascribe the trace $a^\omega$ to such a program. But, fortunately, in our situation this does not occur. As a result, for some language extensions, one may need to consider more complicated formal definitions of trace semantics. This would, however, have no influence on the type system we define and only very little influence on correctness proofs.

\section{Type and Effect System}
\label{sec-e}
In this section, we develop a type and effect system that captures the
set of finite and infinite traces of a program.  We also prove that
this system is sound and complete. This system uses arbitrary
languages for effect annotations and as such is not yet suitable for
practical use let alone automatic inference. Later, in Section~\ref{sec-q}
we define a finitary abstraction of this system which still allows one
to check soundly and completely whether the traces of a given program
are accepted by a fixed B\"uchi automaton.

\begin{definition}[Effect]
Let $U$ be a subset of $\Sigma^*$ and $V$ be a subset of $\Sigma^{\leq\omega}$. An effect of a given expression $e$ is a pair $(U, V)$ satisfying:
(a) if $e\downarrow w$, then $w$ is in $U$;
(b) if $e\uparrow w$, then $w$ is in $V$. 
We use the notation $e\ \&\ (U,V)$ to denote that $(U,V)$ is an effect of $e$.
\end{definition}
Let $\mathfrak{X}$ be a set of variables. Let $V(\mathfrak{X})$ range over expressions of the form:
$\bigcup_{X\in\mathfrak{X}} (A_X\cdot X) \cup B$
with $A_X \subseteq \Sigma^*$ and $B\subseteq \Sigma^{\leq\omega}$. We abbreviate $\mathfrak{X}\setminus\{X\}$ by $\mathfrak{X}-X$ and thus use the notation $V(\mathfrak{X}-X)$ to denote expressions of the form:
$\bigcup_{Y\in\mathfrak{X}-\{X\}} (A_Y\cdot Y) \cup B$. 
We use the symbol $X$ itself for the expression where $B=\emptyset$, $A_X = \{\epsilon\}$, and $A_Y = \emptyset$ for all $Y$ in $\mathfrak{X}-X$. We define the following operations on these expressions:
$A \cdot V(\mathfrak{X}) = \bigcup_{X\in\mathfrak{X}} ((A\cdot A_X)\cdot X) \cup (A\cdot B)$ and
$V(\mathfrak{X}) \cup V'(\mathfrak{X})= \bigcup_{X\in\mathfrak{X}} ((A_X \cup A_X')\cdot X) \cup (B\cup B')$
where $A$ is a subset of $\Sigma^*$. Given an assignment function $\eta : \mathfrak{X} \to \mathcal{P}(\Sigma^{\leq\omega})$ that assigns a set of traces to each variable $X$ in $\mathfrak{X}$, we obtain for each expression $V(\mathfrak{X})$ a language $V(\eta)\subseteq\Sigma^{\leq\omega}$ by substituting $\eta(X)$ for each variable $X$.

We define $A^\omega$ as the set of all words of the form 
$w= w_0w_1w_2\cdots w_i\cdots$ where $w_i\in A$. Note that $A^\omega\subseteq\Sigma^{\leq\omega}$. 
Let $\Delta$ be an {\it environment} that is a set of expressions of the form:
$f\ \&\ (U, X)$
with $f$ a non-primitive procedure, $U$ a subset of $\Sigma^*$, and $X$ a variable in $\mathfrak{X}$ such that if $f\ \&\ (U,X)$ and $g\ \&\ (V,Y)$ both occur in $\Delta$ then $X\neq Y$. With the above definitions, we define the {\it type-and-effect system} in Figure \ref{fig-e}. 
\begin{figure}[h]
{\scriptsize\[
\inferrule*
{ }
{\Delta\vdash o(a)\ \&\ (\{a\}, \emptyset)}\quad 
\inferrule*
{
{\Delta\vdash e_1\ \&\ (U_1, V_1(\mathfrak{X}))}\qquad
{\Delta\vdash e_2\ \&\ (U_2, V_2(\mathfrak{X}))}
}
{\Delta\vdash e_1\,;\,e_2\ \&\ (U_1\cdot U_2, V_1(\mathfrak{X}) \cup U_1\cdot V_2(\mathfrak{X}))}
\]
\[
\inferrule*
{
{\Delta\vdash e_1\ \&\ (U_1, V_1(\mathfrak{X}))}\qquad
{\Delta\vdash e_2\ \&\ (U_2, V_2(\mathfrak{X}))}
}
{\Delta\vdash e_1\,?\,e_2\ \&\ (U_1\cup U_2, V_1(\mathfrak{X})\cup V_2(\mathfrak{X}))}
\quad \inferrule*
{
}
{\Delta, f\ \&\ (U,X)\vdash f\ \&\ (U,X)}
\]
\[
\inferrule*
{
{\Delta, f\ \&\ (U,X)\vdash e_f\ \&\ (U,A\cdot X \cup V(\mathfrak{X}-X))}
}
{\Delta\vdash f\ \&\ (U,A^*\cdot V(\mathfrak{X}-X)\cup A^\omega)}
\]}
\caption{The Type and Effect System}
\label{fig-e}
\end{figure}
An environment $\Delta$ is justified\, if  for all $f\ \&\ (U, X)$ in $\Delta$ one has
$\Delta\vdash e_f\ \&\ (U, A\cdot X\cup V(\mathfrak{X}-X))$ for some $A$, $V(\mathfrak{X}-X)$. 
A justified environment can be extended as follows: 
\begin{lemma}
\label{lem-a}
Given a justified environment $\Delta$ such that 
$\Delta, f\ \&\ (U,X) \vdash e_f\ \&\ (U,A\cdot X\cup V(\mathfrak{X}-X))$,
then the extended environment $\Delta, f\ \&\ (U,X)$ is also justified
\end{lemma}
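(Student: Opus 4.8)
The plan is to reduce the statement to the standard \emph{weakening} property for the type-and-effect system and then to verify the definition of ``justified'' binding by binding.

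First I would establish weakening: if $\Delta\vdash e\ \&\ (U,V(\mathfrak{X}))$ is derivable and $\Delta, g\ \&\ (W,Z)$ is a well-formed environment, then $\Delta, g\ \&\ (W,Z)\vdash e\ \&\ (U,V(\mathfrak{X}))$. This is proved by a routine induction on the derivation. The axiom for $o(a)$ and the lookup axiom are immediate, since enlarging a well-formed context by a fresh binding preserves well-formedness and leaves the conclusion untouched. For the two compositional rules (for $;$ and $?$) one applies the induction hypothesis to each premise and re-applies the same rule. The one rule deserving a moment's care is the recursion rule, because it discharges a binding: to weaken its conclusion $\Delta\vdash f\ \&\ (\dots)$ by $g\ \&\ (W,Z)$, I apply the induction hypothesis to the premise $\Delta, f\ \&\ (U,X)\vdash e_f\ \&\ (\dots)$ and then re-apply the recursion rule with base context $\Delta, g\ \&\ (W,Z)$. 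For the resulting premise context $\Delta, g\ \&\ (W,Z), f\ \&\ (U,X)$ to be well-formed I need $Z\neq X$; should the discharged variable $X$ happen to coincide with $Z$, I first rename it, as $X$ is bound in this subderivation.

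With weakening available, the lemma follows directly by checking the definition of justified for $\Delta, f\ \&\ (U,X)$, that is, by exhibiting for every one of its bindings a derivation of the required shape. For the freshly added binding $f\ \&\ (U,X)$, the obligation is exactly $\Delta, f\ \&\ (U,X)\vdash e_f\ \&\ (U, A\cdot X\cup V(\mathfrak{X}-X))$, which is precisely the hypothesis of the lemma. For each old binding $g\ \&\ (W,Y)\in\Delta$, justifiedness of $\Delta$ supplies some $A'$ and $V'(\mathfrak{X}-Y)$ with $\Delta\vdash e_g\ \&\ (W, A'\cdot Y\cup V'(\mathfrak{X}-Y))$; applying weakening with the binding $f\ \&\ (U,X)$ turns this into $\Delta, f\ \&\ (U,X)\vdash e_g\ \&\ (W, A'\cdot Y\cup V'(\mathfrak{X}-Y))$, which is the obligation for $g$.

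Since every binding of $\Delta, f\ \&\ (U,X)$ now carries a justifying derivation, the extended environment is justified, as required. The only genuinely non-bookkeeping step is the recursion case of the weakening induction; all other cases are direct re-applications of the rules, and the main obstacle there is merely the usual care with the fresh and the bound variables.
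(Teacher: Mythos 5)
Your proof is correct: the new binding is discharged by the lemma's hypothesis, the old bindings by a weakening lemma proved by induction on derivations, with the only delicate point (renaming the bound variable in the recursion rule when it clashes with the fresh binding's variable) handled appropriately. The paper itself states this lemma without proof, treating it as routine, and your weakening-plus-definition-check argument is precisely the standard argument it implicitly relies on.
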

An assignment function $\eta$ \emph{satifies} an environment $\Delta$  if whenever $f\ \&\ (U, X)$ in $\Delta$ and 
$f\uparrow w$ then  $w\in \eta(X)$.
Let us use the notation $\eta \models \Delta$ to denote that the environment $\Delta$ is justified and that the assignment function $\eta$ satisfies $\Delta$. 
\begin{lemma}
\label{lem-b}
Given an environment $\Delta$ and an assignment function $\eta$ satisfying that $\eta \models \Delta$, let $\eta'$ be an extension $\eta[X\mapsto V]$ of $\eta$ such that $f\uparrow w$ implies $w\in V$ for all traces $w$. If we have the derivation: 
$\Delta, f\ \&\ (U,X) \vdash e_f\ \&\ (U, A\cdot X\cup V(\mathfrak{X}-X))$,
then $\eta' \models \Delta, f\ \&\ (U,X)$.
\end{lemma}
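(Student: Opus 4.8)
The plan is to unfold $\eta' \models \Delta, f\ \&\ (U,X)$ into its two defining obligations---that the extended environment is \emph{justified}, and that $\eta'$ \emph{satisfies} it---and discharge each in turn. Justification is immediate from Lemma~\ref{lem-a}: the hypothesis $\eta \models \Delta$ supplies that $\Delta$ is justified, and the assumed derivation $\Delta, f\ \&\ (U,X) \vdash e_f\ \&\ (U, A\cdot X \cup V(\mathfrak{X}-X))$ is exactly the premise that lemma needs in order to conclude that $\Delta, f\ \&\ (U,X)$ is justified. So this half requires no new work beyond invoking the previous lemma.

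For the satisfaction half I would check the condition entry by entry over $\Delta, f\ \&\ (U,X)$. There are two kinds of entry. For the freshly added $f\ \&\ (U,X)$, satisfaction demands that $f\uparrow w$ imply $w\in\eta'(X)$; since $\eta'(X)=V$ by construction, this is literally the standing assumption that $f\uparrow w$ implies $w\in V$. For any pre-existing $g\ \&\ (W,Y)\in\Delta$, I would use that $\eta$ already satisfies $\Delta$, so $g\uparrow w$ implies $w\in\eta(Y)$, and then observe that $\eta'$ and $\eta$ agree at $Y$.

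The one point meriting attention is the last observation, $\eta'(Y)=\eta(Y)$. This rests on $Y\neq X$, which in turn is forced by well-formedness of the environment: for $\Delta, f\ \&\ (U,X)$ to be a legal environment the variable $X$ must be distinct from every variable occurring in $\Delta$, so reassigning $\eta$ at $X$ cannot disturb its values on the old entries. Beyond this piece of bookkeeping the argument is routine; all of the genuinely substantive content---that typing $e_f$ propagates justification to the new entry---has been isolated into Lemma~\ref{lem-a}, and here it is simply applied.
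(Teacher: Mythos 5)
Your proposal is correct, and it is essentially the argument the authors intend: the paper states Lemma~\ref{lem-b} without proof (it is not among the proofs worked out in the appendix), treating it as routine, and your decomposition---justification of the extended environment via Lemma~\ref{lem-a}, then satisfaction checked entry by entry, with the new entry discharged by the hypothesis on $V$ and the old entries by $\eta'(Y)=\eta(Y)$, which follows from the freshness of $X$ forced by the well-formedness condition on environments---is exactly the unfolding of the definition of $\models$ that the paper relies on. The one point you flag, $Y\neq X$, is indeed the only nontrivial bookkeeping step, and you justify it correctly.
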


\begin{theorem}[Soundness]
\label{thm-a}
Given an environment $\Delta$ and an assignment function $\eta$ satisfying that $\eta \models \Delta$, for all derivations: 
$\Delta\vdash e\ \&\ (U, V(\mathfrak{X}))$
of an expression $e$, we have:
$e\downarrow w$ implies $w \in U$ and
$e\uparrow w$ implies $w\in V(\eta)$.
\end{theorem}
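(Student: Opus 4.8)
The statement splits into two independent claims, the ``terminating'' part $e\downarrow w \Rightarrow w\in U$ and the ``nonterminating'' part $e\uparrow w \Rightarrow w\in V(\eta)$, and the plan is to prove them by two different inductions. I would establish the terminating part first, by induction on the height of the observed-trace derivation witnessing $e\Downarrow w'$ with $\theta(w')=w$ (rather than on the typing derivation, which is too short because the variable rule cuts recursive calls off). In each case the shape of the $\Downarrow$-derivation is matched against the last typing rule: the axiom and the rules for $;$ and $?$ are routine, using that $\downarrow$ decomposes exactly as the effect operations $U_1\cdot U_2$ and $U_1\cup U_2$ prescribe. For a call $f$, note $f\Downarrow w'$ arises only from $e_f\Downarrow w'$, so the $\Downarrow$-derivation of $e_f$ is strictly smaller; since $\Delta$ is justified (respectively, by the premise of the recursion rule) there is a typing $\vdash e_f\ \&\ (U,\dots)$ with the same $U$, and the induction hypothesis gives $w\in U$.

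\textbf{Nonterminating part.} For $e\uparrow w\Rightarrow w\in V(\eta)$ I would induct on the typing derivation. The axiom gives $V=\emptyset$ and indeed $o(a)$ has no infinite computation; the $;$- and $?$-cases use the structural analysis of $\uparrow$: $e_1;e_2\uparrow w$ holds either because $e_1\uparrow w$ (handled by the induction hypothesis, giving $w\in V_1(\eta)$) or because $e_1\downarrow w_1$ and $e_2\uparrow w_2$ with $w=w_1w_2$ (using the already-proven terminating part for $e_1$ and the induction hypothesis for $e_2$, giving $w\in U_1\cdot V_2(\eta)$), which is exactly $V_1(\eta)\cup U_1\cdot V_2(\eta)$. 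The variable rule is where the hypothesis $\eta\models\Delta$ is used directly: if $f\ \&\ (U,X)\in\Delta$ and $f\uparrow w$ then $w\in\eta(X)$ by the definition of ``$\eta$ satisfies $\Delta$''.

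\textbf{The recursion rule (the crux).} The real work is the last rule, concluding $\Delta\vdash f\ \&\ (U, A^*\cdot V(\mathfrak{X}-X)\cup A^\omega)$ from a derivation of $e_f$ under $\Delta, f\ \&\ (U,X)$. Write $D=\{w\mid f\uparrow w\}$ and $C=V(\mathfrak{X}-X)(\eta)$; since the conclusion's effect evaluates under $\eta$ to $A^*\cdot C\cup A^\omega$, the goal is $D\subseteq A^*\cdot C\cup A^\omega$. The plan is to show that $D$ is a post-fixpoint of $\Phi(Z)=A\cdot Z\cup C$ and then invoke Knaster--Tarski. First, the $\checkmark$-instrumented semantics gives $f\uparrow w \iff e_f\uparrow w$ (the extra $\checkmark$ prefixed to $f$'s divergent traces is erased by $\theta$), so $D=\{w\mid e_f\uparrow w\}$. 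Next, take $\eta'=\eta[X\mapsto D]$; since trivially $f\uparrow w\Rightarrow w\in D$, Lemma~\ref{lem-b} (with Lemma~\ref{lem-a} for justifiedness) yields $\eta'\models\Delta, f\ \&\ (U,X)$. Applying the induction hypothesis to the premise derivation under $\eta'$ gives $e_f\uparrow w\Rightarrow w\in(A\cdot X\cup V(\mathfrak{X}-X))(\eta')=A\cdot D\cup C$, because $\eta'(X)=D$ and $V(\mathfrak{X}-X)$ does not mention $X$ so still evaluates to $C$. Hence $D\subseteq A\cdot D\cup C=\Phi(D)$.

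\textbf{Main obstacle.} The remaining and central point is the lattice-theoretic fact that $A^*\cdot C\cup A^\omega$ is the greatest fixpoint of $\Phi(Z)=A\cdot Z\cup C$ on $\mathcal{P}(\Sigma^{\leq\omega})$; granting this, the post-fixpoint inclusion $D\subseteq\Phi(D)$ and Knaster--Tarski give $D\subseteq A^*\cdot C\cup A^\omega$, closing the recursion case. This greatest-fixpoint identity is the genuinely $\omega$-regular ingredient of the argument, and it is exactly what makes the nonstandard recursion rule sound, in contrast to an ordinary recursive rule which would compute a \emph{least} fixpoint and so wrongly admit effects such as $(\emptyset,(a^*b)^\omega)$ for $m()=o(a);m()$. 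I would verify the identity directly, checking $\Phi(A^*\cdot C\cup A^\omega)=A^*\cdot C\cup A^\omega$ and maximality by unfolding an arbitrary fixpoint, taking care over the degenerate cases $\epsilon\in A$ and $A=\emptyset$.
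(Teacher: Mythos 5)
Your overall architecture is recognizably a variant of the paper's own argument for the non-terminating part: where you show that $D=\{w\mid f\uparrow w\}$ is a post-fixpoint of $\Phi(Z)=A\cdot Z\cup C$ and invoke Knaster--Tarski, the paper instead iterates $\Phi$ downward from the top element (assignments $\eta_0=\eta[X\mapsto\Sigma^{\leq\omega}]$, $\eta_{n+1}=\eta[X\mapsto A\cdot\eta_n(X)\cup V(\eta)]$, each step justified by Lemma~\ref{lem-b}) and then intersects, claiming $\bigcap_n\eta_n(X)=A^*\cdot V(\eta)\cup A^\omega$. Your single application of Lemma~\ref{lem-b} with $\eta'=\eta[X\mapsto D]$ is legitimate and more economical, and your separate induction on the $\Downarrow$-derivation for the terminating part addresses a point the paper passes over in silence (satisfaction of $\Delta$ says nothing about the $U$-component at the variable rule, so the typing-induction alone does not cover it).

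The genuine gap is the step you defer to the end: the identity $\mathrm{gfp}(\Phi)=A^*\cdot C\cup A^\omega$ is not something that survives ``taking care over the degenerate cases'' --- it is simply false whenever $\epsilon\in A$. Take $A=\{\epsilon\}$, $C=\emptyset$: then $\Phi$ is the identity map, so $\mathrm{gfp}(\Phi)=\Sigma^{\leq\omega}$, while $A^*\cdot C\cup A^\omega=\{\epsilon\}$. And $\epsilon\in A$ is not an exotic corner: it occurs exactly when the recursive call is not guarded by any output, e.g.\ $e_f=f$ or $e_f=f\,?\,o(a)$, which is precisely the ``unproductive recursion'' that the $\checkmark$-instrumentation was introduced to tame. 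In that situation the set-level inclusion $D\subseteq A\cdot D\cup C$ is genuinely too weak (it cannot distinguish $m()=m()$ from a productive loop), so the case must be closed by exploiting productivity --- for instance by running your gfp argument over the $\checkmark$-instrumented traces, where the relevant prefix language $\{\checkmark\}\cdot A$ is $\epsilon$-free and your unfolding argument (peel off one factor at a time; lengths grow strictly, so the factorization converges to the word) does go through. For what it is worth, you are in good company: the paper's own claim that $\eta_\omega(X)=A^*\cdot V(\eta)\cup A^\omega$ fails on the same example, and its route has an additional defect yours avoids, namely that $\bigcap_n\Phi^n(\top)$ can strictly exceed $\mathrm{gfp}(\Phi)$ when $A$ is infinite ($\Phi$ need not be co-continuous: having depth-$n$ decompositions for every $n$, possibly pairwise incompatible, does not yield one infinite decomposition), whereas Knaster--Tarski needs no convergence of the iteration at $\omega$. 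So where your key identity holds ($\epsilon\notin A$) your proof is actually the more robust one; the unguarded case is the one real hole, and it is a hole in the paper as well.
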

\begin{proof}
The only interesting case is that for the last rule in Figure \ref{fig-e} which relies on Lemma \ref{lem-b}. For more details, see Appendix \ref{a-p}.
\end{proof}
\begin{corollary}
\label{cor-a}
For all derivations
$\vdash e\ \&\ (U,V)$
of an expression $e$, we have:
$e \downarrow w$ implies $w\in U$ and
$e \uparrow w$ implies $w\in V$.
\end{corollary}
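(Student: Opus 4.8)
The plan is to derive Corollary~\ref{cor-a} as the special case of Theorem~\ref{thm-a} in which the environment $\Delta$ is empty. First I would observe that when $\Delta=\emptyset$ the judgement $\vdash e\ \&\ (U,V)$ refers to an effect $V$ containing no free type variables, i.e.\ the set of variables $\mathfrak{X}$ over which the effect ranges is empty. Thus $V(\mathfrak{X})$ degenerates to its constant part $B$, which is just a fixed language $V\subseteq\Sigma^{\leq\omega}$, and this holds independently of any assignment function.

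Next I would check that the empty environment is justified and is satisfied by any assignment function, so that the hypothesis $\eta\models\Delta$ of Theorem~\ref{thm-a} is met vacuously. Since $\Delta=\emptyset$ contains no expressions of the form $f\ \&\ (U,X)$, the justification condition and the satisfaction condition both hold trivially for the empty (or any) assignment $\eta$. I would pick, say, the assignment $\eta$ sending every variable to $\emptyset$, giving $\eta\models\emptyset$.

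With these observations in place the corollary is immediate: applying Theorem~\ref{thm-a} to $\Delta=\emptyset$, the chosen $\eta$, and the given derivation $\vdash e\ \&\ (U,V)$ yields that $e\downarrow w$ implies $w\in U$ and $e\uparrow w$ implies $w\in V(\eta)$. Since $V(\mathfrak{X})$ has no free variables here, $V(\eta)=V$ regardless of $\eta$, and the two implications become exactly the statement of the corollary.

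I do not expect any genuine obstacle, as the result is a direct instantiation. The only point requiring a moment of care is the identification $V(\eta)=V$: one must confirm that a closed effect expression (empty $\mathfrak{X}$) denotes the same language under every assignment, which follows directly from the definition of $V(\eta)$ as substitution of $\eta(X)$ into the variable slots, of which there are none. Everything else is a matter of checking that the premises of Theorem~\ref{thm-a} specialise correctly to the empty environment.
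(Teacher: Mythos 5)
Your proposal is correct and matches the paper's intent exactly: the corollary is the immediate instantiation of Theorem~\ref{thm-a} at the empty environment, which is vacuously justified and satisfied by any assignment, with the closed effect expression denoting the same language $V$ under every $\eta$. The paper leaves this instantiation implicit, and your write-up simply spells out the same routine specialisation.
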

Fix for each non-primitive procedure $f\in\mathcal{F}$ a unique variable $X_f$. If
$\vec A = (A_f)_f$ is a family of languages with
$A_f\subseteq\Sigma^*$ define the corresponding environment
$\Delta(\vec A)$ as to contain the bindings $f\ \&\ (A_f,X_f)$. For each
function body $e_f$ we can now derive using the rules except the last
one a unique typing $\Delta(\vec A)\vdash e_f\ \&\ \dots$. The passage
from $\vec A$ to $\vec C=(C_f)_f$ defines a monotone operator $\Phi$
on the lattice $\mathcal{P}(\Sigma)^{\mathcal{F}}$. If $\vec B$ is
the least fixpoint of this operator then $\Delta(\vec B)$ is justified
and we get the judgements $\Delta(\vec B)\vdash e_f\ \&\ (B_f,
V(\mathfrak{X}))$. Successive application of the last rule then gives
judgements $\vdash f\ \&\ (U_f,V_f)$ and a direct induction shows that in
fact $U_f=\{w\mid f\downarrow w\}$ and $V_f=\{w\mid f\uparrow w\}$. We
have thus shown:
\begin{theorem}[Completeness]
\label{thm-ca}
The judgements $\vdash f : (\{w\mid f\downarrow w\},\{w\mid f\uparrow w\})$ are derivable for each $f$. 
\end{theorem}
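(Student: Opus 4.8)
The plan is to realise the recipe sketched just above the statement as a two-part fixpoint argument, treating the terminating ($U$-) and non-terminating ($V$-) components separately and isolating the single genuinely infinitary step. First I would settle the finite part. Assign each $f\in\mathcal F$ its variable $X_f$ and, for a family $\vec A=(A_f)_f\in\mathcal P(\Sigma^*)^{\mathcal F}$, form $\Delta(\vec A)$. Deriving $e_f$ with every rule except the last is forced and computes the $U$-component by the clauses $\{a\}$, $U_1\cdot U_2$, $U_1\cup U_2$ and lookup $A_f$; hence the induced operator $\Phi$ is built only from union and concatenation, is Scott-continuous, and has least fixpoint $\vec B=\bigsqcup_n\Phi^n((\emptyset)_f)$. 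Since these clauses mirror the $\Downarrow$-rules exactly, a routine induction on derivation depth identifies $\Phi^n((\emptyset)_f)$ with the terminating traces provable in at most $n$ steps, so $B_f=\{w\mid f\downarrow w\}$. Lemma~\ref{lem-a} then makes $\Delta(\vec B)$ justified, and as the last rule copies the $U$-component unchanged, the eventual closed judgement $\vdash f\ \&\ (U_f,V_f)$ has $U_f=B_f=\{w\mid f\downarrow w\}$.

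For the $V$-part I would read the repeated use of the last rule as solving, by a \emph{greatest} fixpoint, the linear system $X_i=\bigcup_j A_{ij}\cdot X_j$ that $\Delta(\vec B)$ assigns to the bodies, where $A_{ij}\subseteq\Sigma^*$ collects the finite traces by which $e_{f_i}$ reaches a recursive call of $f_j$ (these coefficients are produced by the $;$-rule $V_1\cup U_1\cdot V_2$, using the exact terminating languages $\vec B$ for the prefixes). Each application of the rule is Arden's rule carried out over $\Sigma^{\leq\omega}$: it replaces $X=A\cdot X\cup V(\mathfrak X-X)$ by $A^*\cdot V(\mathfrak X-X)\cup A^\omega$, i.e.\ it keeps the greatest rather than the least solution, the extra summand $A^\omega$ being precisely the divergent behaviour that the least solution drops (as the example $m()=o(a);m()$ with $V=a^\omega$ illustrates). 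I would first check that eliminating the variables one at a time does compute the greatest fixpoint $\vec V$ of the whole system, a standard fact about solving regular equation systems transposed to the $\omega$-setting. One inclusion, $\{w\mid f\uparrow w\}\subseteq V_f$, is then immediate: it is exactly Corollary~\ref{cor-a} specialised to this closed derivation.

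The remaining inclusion $V_f\subseteq\{w\mid f\uparrow w\}$ is the crux and the step I expect to be the main obstacle, since it is the only genuinely infinitary one. Here I must show that every word in the greatest-fixpoint solution is \emph{realised} by an actual non-terminating computation. For $w=w_0w_1w_2\cdots\in A^\omega$ each $w_i$ witnesses a finite computation of $e_f$ that emits $w_i$ and ends in a recursive call of $f$; the fact that $\vec B$ is the \emph{least} fixpoint is what guarantees that $A_{ij}$ contains only such genuine finite segments (an over-approximation would admit spurious prefixes whose preceding computation does not actually terminate). Chaining these segments yields an infinite derivation, whence $f\uparrow w$, and mutual recursion is handled by the same chaining argument run along the path through the $A_{ij}$ dictated by the membership of $w$ in the solved system. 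The delicate point is the passage from ``arbitrarily long consistent prefixes'' to ``one infinite computation'': this is exactly where the finite-store/finite-branching remark and K\"onig's lemma are indispensable, and without that hypothesis this inclusion would genuinely fail. Combining the two parts gives $U_f=\{w\mid f\downarrow w\}$ and $V_f=\{w\mid f\uparrow w\}$, as claimed.
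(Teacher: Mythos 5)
Your skeleton for the $U$-part --- the operator $\Phi$ on $\mathcal{P}(\Sigma^*)^{\mathcal{F}}$, its least fixpoint $\vec B$ giving exactly the terminating traces, justifiedness of $\Delta(\vec B)$, then successive applications of the last rule --- is exactly the paper's, and that half is fine (one small slip: justifiedness of $\Delta(\vec B)$ follows directly from $\vec B$ being a fixpoint of $\Phi$, not from Lemma~\ref{lem-a}, which is about \emph{extending} an already justified environment). The genuine gap is in your reading of the recursion rule as Arden's rule ``keeping the greatest solution,'' and of variable elimination as computing the \emph{greatest fixpoint} of the system $X_i=\bigcup_j A_{ij}\cdot X_j$. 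This is false as soon as some coefficient contains $\epsilon$, i.e.\ as soon as a recursive call can be reached silently --- precisely the unproductive recursions this paper takes pains to treat (that is what the $\checkmark$ machinery is for). For the paper's own example $h=h$ the equation is $X=\{\epsilon\}\cdot X$, whose greatest solution is all of $\Sigma^{\leq\omega}$, while the rule computes $\{\epsilon\}^*\cdot\emptyset\cup\{\epsilon\}^\omega=\{\epsilon\}$, which is the correct trace set. Similarly, for $f = f\ ?\ (o(a)\,;f)$ one gets $A=\{\epsilon,a\}$; the greatest solution of $X=A\cdot X$ is again all of $\Sigma^{\leq\omega}$, yet the trace set is $A^\omega=a^*\cup\{a^\omega\}$, which is what the rule yields. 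So your plan to ``first check'' that elimination computes the greatest fixpoint is doomed, and --- worse --- the crux inclusion $V_f\subseteq\{w\mid f\uparrow w\}$, which you would then be proving \emph{for the greatest fixpoint}, is simply false for that object. The formula $A^*\cdot V(\mathfrak{X}-X)\cup A^\omega$ coincides with the greatest solution only under the side condition $\epsilon\notin A$; in general it can lie strictly between the least and the greatest solutions, and the proof has to work with this formula itself, not with a fixpoint characterisation of it.

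The repair is to drop the fixpoint detour: take $V_f$ to be literally what the successive applications of the last rule produce, get $\{w\mid f\uparrow w\}\subseteq V_f$ from Corollary~\ref{cor-a} as you do, and run your chaining argument directly on $A^*\cdot V(\mathfrak{X}-X)\cup A^\omega$, where leastness of $\vec B$ guarantees (as you correctly note) that each segment $w_i\in A$ is realised by a partial run of $e_f$ ending at a recursive call. Segments $w_i=\epsilon$ are then unproblematic: the chained run still performs a call at every stage, so its \emph{extended} trace contains infinitely many $\checkmark$'s and $\theta$ of it is the finite word $w$ --- this is exactly why the paper defines $A^\omega$ so as to contain such finite words. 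Finally, your appeal to K\"onig's lemma is misplaced: the judgement $f\uparrow w$ only demands one infinite extended word all of whose finite prefixes are $\Uparrow$-derivable, and the chaining construction exhibits that word explicitly, so no compactness argument is needed and this inclusion would not ``genuinely fail'' without finite store. The paper's finite-store/K\"onig remark concerns the faithfulness of the prefix-based semantics to actual infinite executions, not any step of the completeness proof.
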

We have kept the proof of this theorem in the running text since the monotone operator $\Phi$ is still needed later. 
\section{B\"{u}chi Type and Effect System}
\label{sec-q}
Based on equivalence relations on finite words defined by the policy B\"{u}chi automata, we introduce an abstraction of languages of finite and infinite words: the B\"{u}chi abstraction. We place this abstraction into the framework of abstract interpretation and show that crucial operations, namely concatenation, least fixpoint, and infinite iteration ($(-)^\omega$) can be computed on the level of the abstraction. We also show that the abstraction does not lose any information as far as acceptance by the fixed policy automaton is concerned. This then allows us to replace the infinitary effects in the previous type system by their finite abstraction and thus to obtain a type-and-effect system which is decidable with low complexity (in the program size) and yet complete. The soundness and completeness of this system follow directly from lattice-theoretic properties of this B\"{u}chi abstraction (Lemma~\ref{lem-d} and Theorem~\ref{prese}). 
\subsection{Extended B\"{u}chi Automata}\label{xx}
Given an expression $e$, our goal is to verify that the set $T(e)$ of all traces generated by $e$ satisfies some property. We use a mild extension of the standard B\"uchi Automata which we call \emph{extended B\"uchi automata}:
\begin{definition}
An extended B\"uchi Automaton is a quadruple
$
\mathfrak{A} = (Q, \Sigma, \delta, q_0, F)
$
where $Q$ is a finite set of states, $\Sigma$ is an alphabet;
hereafter always required to be equal to the fixed alphabet of events;
$\delta : Q\times \Sigma\to \mathcal{P}(Q)$ the transition function, the
initial state $q_0\in Q$, and the set $F \subseteq Q$ of final
states. The language $L(\mathfrak{A})$ of $\mathfrak{A}$ is defined
as: the set of all finite words by which a final state can be reached
from the initial state and all infinite words for which there is a
path which starts from the initial state and goes through final states
infinitely often. Thus, $L(\mathfrak{A})$ is the union of
$\mathfrak{A}$'s language when understood as a traditional NFA and its
language when understood as a traditional B\"uchi automaton.
\end{definition}


Following B\"uchi's original works we use equivalence relations
defined by extended B\"{u}chi automata themselves to obtain finite
representations of $U$ and $V$. We write $q \stackrel{w}{\leadsto}
q'$ to mean that the state $q'$ is reachable from the state $q$ by
using the finite word $w$. Let $q\stackrel{w}{\leadsto}_F q'$
denote that by using the finite word $w$, the state $q'$ can be
reached from the state $q$ in such a way that a final state is visited
on the way. In particular, $q\stackrel{w}{\leadsto}q'$ with $q\in
F$ or $q'\in F$ implies $q\stackrel{w}{\leadsto}_F q'$. Formally,
we have $q\stackrel{w}{\leadsto}_F q'$ iff there exists $q''\in F$
and $u,v$ such that $w=uv$ and $q
\stackrel{u}{\leadsto} q''$ and $q'' \stackrel{v}{\leadsto}
q'$.

For nonempty words $w,u\in\Sigma^+$ we define
\begin{eqnarray*}
w \sim u\ \equiv\ \forall p,q\in Q\ \, .\ \, p\stackrel{w}{\leadsto} q \Leftrightarrow p\stackrel{u}{\leadsto} q\ \wedge\ p\stackrel{w}{\leadsto}_F q \Leftrightarrow p\stackrel{u}{\leadsto}_F q\ .
\end{eqnarray*}
We write $[w]$ for the equivalence class of $w\in\Sigma^+$ and additionally 
let $[\epsilon]$ stand for $\{\epsilon\}$. We write $\mathcal{Q}$ for $\Sigma^+/{\sim}\uplus \{[\epsilon]\}$. Thus  $\mathcal{Q}$ comprises the $\sim$-equivalence classes and a special class for the empty word. 

We notice that if $w\sim u$ and $w'\sim u'$ then $ww'\sim uu'$. As a result concatenation is well-defined on equivalence classes, thus $\Sigma^+/\sim$ becomes a semigroup and $\mathcal{Q}$ a monoid. The following Lemma is a straightforward consequence of standard results about B\"uchi automata \cite{Tho-languages-al}. 
\begin{lemma}
\label{lem-d}
Fix an extended B\"{u}chi automaton $\mathfrak{A} = (Q, \Sigma, \delta, q_0, F)$, 
\begin{itemize}
\item[(a)] $\mathcal{Q}$ is finite and  its elements  are regular languages;
\item[(b)] for all classes $C$ in $\mathcal{Q}$,
$C \cap  L(\mathfrak{A}) \neq \emptyset$ implies
$C \subseteq  L(\mathfrak{A})$;  
\item[(c)] for all classes $C$ and $D$ in $\mathcal{Q}$,
$CD^\omega \cap  L(\mathfrak{A}) \neq \emptyset$ implies 
$CD^\omega \subseteq  L(\mathfrak{A})$;
\item[(d)] for every word $w\in\Sigma^{\leq \omega}$ there exist classes $C,D\in\mathcal{Q}$ so that $w\in CD^\omega$ and $CD=C$ and $DD=D$.
\end{itemize}
\end{lemma}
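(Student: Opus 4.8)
The plan is to prove the four parts of Lemma~\ref{lem-d} by appealing to the structure of the finite monoid $\mathcal{Q}$ together with classical facts from the theory of B\"uchi automata. For part (a), I would observe that the equivalence $\sim$ has finite index because an equivalence class is completely determined by the two relations $p\stackrel{w}{\leadsto}q$ and $p\stackrel{w}{\leadsto}_F q$ on the finite state set $Q$; there are only finitely many such pairs of relations, so $\Sigma^+/{\sim}$ and hence $\mathcal{Q}$ are finite. Regularity of each class follows since membership $w\in C$ is decided by the finite profile of transition data attached to $w$, which is recognised by a (product) automaton tracking reachability with and without passing through $F$.

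For parts (b) and (c), the key observation is that membership of $w$ in $L(\mathfrak{A})$ is itself determined only by the profile of $w$, i.e.\ is $\sim$-invariant for the finite part and $\sim$-invariant in the appropriate sense for the $CD^\omega$ decomposition. Concretely, for (b) I would note that whether a final state is reachable from $q_0$ by a finite word $w$ depends only on the relations $q_0\stackrel{w}{\leadsto}q$ and $q_0\stackrel{w}{\leadsto}_F q$, which are constant on the class $C$; hence if one representative of $C$ is accepted, all are. For (c) I would use the same idea one level up: acceptance of an infinite word in $CD^\omega$ hinges on the existence of a run $q_0\stackrel{C}{\leadsto}p\stackrel{D}{\leadsto}_F p\stackrel{D}{\leadsto}_F p\cdots$, and the existence of a suitable state $p$ with $q_0\stackrel{w}{\leadsto}p$ and $p\stackrel{u}{\leadsto}_F p$ depends only on the classes $[w]=C$ and $[u]=D$, so it too is invariant under the choice of representatives. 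This is the standard saturation argument behind B\"uchi's complementation construction, which I would cite to \cite{Tho-languages-al}.

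The main obstacle, and the genuinely nontrivial part, is (d): every word, whether finite or infinite, admits a decomposition $w\in CD^\omega$ with the idempotence-compatibility conditions $CD=C$ and $DD=D$. For finite $w$ this is easy, since one may take $D=[\epsilon]=\{\epsilon\}$, which is the identity of the monoid and satisfies $DD=D$ and $CD=C$ trivially while $D^\omega=\{\epsilon\}$, so that $w\in C\{\epsilon\}$. The essential case is $w\in\Sigma^\omega$, where I would factor $w=w_0w_1w_2\cdots$ into consecutive finite blocks and apply the infinite Ramsey theorem to the colouring of pairs $i<j$ by the class $[w_i\cdots w_{j-1}]\in\mathcal{Q}$. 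Since $\mathcal{Q}$ is finite, Ramsey yields an infinite set of indices $i_0<i_1<i_2<\cdots$ all of whose induced blocks share a single class $D$, and since $D$ arises as the class of concatenations of two such blocks it is idempotent, $DD=D$. Setting $C=[w_0\cdots w_{i_0-1}]$ (or $C=[\epsilon]$ if $i_0=0$) gives $w\in CD^\omega$, and absorbing the prefix block into $C$ if necessary, or replacing $C$ by $CD$, arranges $CD=C$. I would then verify that this regrouping does not change the word denoted, using that concatenation is well-defined on classes and the definition of $D^\omega$; the only care needed is to confirm the compatibility equations hold simultaneously, which follows because $CD=C$ can always be forced by the absorption $C\mapsto CD^{k}$ stabilising in the finite monoid.
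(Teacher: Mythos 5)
Your parts (a), (b) and (d) are essentially the paper's own proof: (a) and (b) are the profile-invariance observations the paper dismisses as obvious, and (d) is exactly the paper's Ramsey argument (the paper colours pairs of letter positions rather than block indices and takes $C:=[a_1\dots a_{i_2-1}]$, i.e.\ it absorbs the first $D$-block into the prefix, which is precisely your ``replace $C$ by $CD$'' repair; both yield $CD=C$ and $DD=D$).

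The gap is in (c). You reduce acceptance of a word of $CD^\omega$ to the existence of a lasso $q_0\stackrel{C}{\leadsto}p$, $p\stackrel{D}{\leadsto}_F p$, and treat this as equivalent to $CD^\omega\cap L(\mathfrak{A})\neq\emptyset$. That equivalence is false for general classes, and part (c) is asserted for \emph{all} $C,D\in\mathcal{Q}$, not only for pairs with $CD=C$ and $DD=D$. Concretely, take $Q=\{q_0,q_1\}$, $\delta(q_0,a)=\{q_1\}$, $\delta(q_1,a)=\{q_0\}$, $F=\{q_0\}$: all odd-length words form a single class $[a]$, and with $C=D=[a]$ we get $CD^\omega=\{a^\omega\}\subseteq L(\mathfrak{A})$, yet there is no state $p$ with $q_0\stackrel{a}{\leadsto}p$ and $p\stackrel{a}{\leadsto}_F p$ (the only candidate is $p=q_1$, and $a$ sends $q_1$ back to $q_0$). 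The ``acceptance implies lasso'' direction fails because an accepting run need not return to the same state after every single $D$-block; it only returns to some recurring state after reading a \emph{power} $D^k$, and collapsing $D^k$ to $D$ uses exactly the idempotency $DD=D$ that you are not entitled to assume here. The paper's proof avoids lassos altogether: given $w,w'\in CD^\omega$ with $w$ accepted, decompose both into blocks $w_1w_2\cdots$ and $w_1'w_2'\cdots$ with $w_1,w_1'\in C$ and $w_i,w_i'\in D$ for $i>1$; since $w_i\sim w_i'$ for every $i$, the accepting run of $w$, read off at block boundaries, transfers verbatim to an accepting run of $w'$, including the $\leadsto_F$ visits. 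This needs no recurring state and no idempotency. Your argument can be repaired (regroup the blocks of $w'$ to match the powers $D^{k}$ read between occurrences of a recurring state in $w$'s run, using that a product of classes lies in a single class), but as written the step fails.
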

The sets $CD^\omega$ (with $CD=C$ and $DD=D$) thus behave almost like classes themselves, but an important difference is that they may nontrivially overlap. If $CD^\omega\cap UV^\omega\neq \emptyset$ then in general one cannot conclude $CD^\omega=UV^\omega$. We also  remark that Ramsey's theorem is used in the proof of (d). 
\subsection{B\"{u}chi Abstraction}\label{ba}
Lemma~\ref{lem-d} shows that given an extended B\"{u}chi automaton $\mathfrak{A}$, without affecting property checking, we can use sets of classes in 
$\mathcal{Q}$ to represent languages over $\Sigma^*$ 
and sets of pairs of classes $(C,D)$ such that $CD=C$ and $DD=D$ to represent languages over $\Sigma^{\leq\omega}$. Let us write $\mathcal{C}:=\{(C,D)\mid C,D\in\mathcal{Q} \wedge CD=C, DD=D\}$. Let us define the pre-abstraction function 
$\mathfrak{f} : \mathcal{P}(\Sigma^{\leq\omega})\to \mathcal{P}(\mathcal{C})$
and the pre-concretization function
$\mathfrak{g} : \mathcal{P}(\mathcal{C})\to\mathcal{P}(\Sigma^{\leq\omega})$
as:
$\mathfrak{f}(V) = \{(C,D)\in\mathcal{C}\mid CD^\omega\cap V\neq\emptyset\}$
and
$\mathfrak{g}(\mathcal{V}) = \bigcup_{(C,D)\in\mathcal{V}}CD^\omega$ respectively.
A set $\mathcal{V}\subseteq \mathcal{C}$ is
\emph{closed}, if
$\mathfrak{f}(\mathfrak{g}(\mathcal{V}))=\mathcal{V}$. Explicitly,
$\mathcal{V}$ is closed if whenever $UV^\omega\cap
CD^\omega\neq\emptyset$ for some $(C,D)\in\mathcal{V}$ and $(U,V)\in\mathcal{C}$  then already $(U,V)\in\mathcal{V}$.

Clearly, for every set $\mathcal{V}\subseteq
\mathcal{C}$ there is a least closed superset and it
is given by applying the {\it closure} function:
$\mathfrak{c} : \mathcal{P}(\mathcal{C})\to\mathcal{P}(\mathcal{C})$ which is defined as:
$\mathfrak{c}(\mathcal{V}) = \bigcup_{n=1}^\infty (\mathfrak{f}\circ \mathfrak{g})^n(\mathcal{V})$.
We write $\mathcal{M}_{\leq\omega} = \{\mathfrak{c}(\mathfrak{f}(V))\mid V\subseteq\Sigma^{\leq\omega}\}$
for the set such closed subsets.
The elements of $\mathcal{M}_{\leq\omega}$ will serve as abstractions of languages over $\Sigma^{\leq\omega}$. 

We also define explicitly 
$\mathcal{M}_* = \mathcal{P}(\mathcal{Q})$
to represent languages over $\Sigma^*$ but note that via the embedding 
$C\mapsto (C,\{\epsilon\})$ we could identify $\mathcal{M}_*$ with a subset of $\mathcal{M}_{\leq\omega}$. 
\begin{lemma}
\label{lem-in}
Both $\mathcal{M}_*$ and $\mathcal{M}_{\leq\omega}$ are complete lattices with respect to inclusion. 
\end{lemma}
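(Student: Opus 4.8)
The plan is to verify the two defining properties of a complete lattice—existence of arbitrary meets and joins—for each of the two posets, working with the ordering by inclusion in both cases.

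For $\mathcal{M}_* = \mathcal{P}(\mathcal{Q})$ the claim is almost immediate: since $\mathcal{Q}$ is a finite set by Lemma~\ref{lem-d}(a), its full powerset ordered by inclusion is a complete lattice, with arbitrary unions as joins and arbitrary intersections as meets, top element $\mathcal{Q}$ and bottom element $\emptyset$. So the first half requires essentially no work beyond this remark.

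The substantive half is $\mathcal{M}_{\leq\omega}$, whose elements are the closed subsets of $\mathcal{C}$, i.e.\ those $\mathcal{V}$ with $\mathfrak{f}(\mathfrak{g}(\mathcal{V})) = \mathcal{V}$. The strategy I would follow is the standard one for the lattice of closed sets of a closure operator. First I would check that $\mathfrak{c} = \bigcup_{n\ge 1}(\mathfrak{f}\circ\mathfrak{g})^n$ is genuinely a closure operator on the finite lattice $\mathcal{P}(\mathcal{C})$: it is extensive ($\mathcal{V}\subseteq\mathfrak{c}(\mathcal{V})$, since the $n=1$ term already contains $\mathcal{V}$ once one notes $\mathcal{V}\subseteq\mathfrak{f}(\mathfrak{g}(\mathcal{V}))$), monotone (both $\mathfrak{f}$ and $\mathfrak{g}$ are monotone, hence so is their composite and its iterates), and idempotent. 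Finiteness of $\mathcal{C}$ is what guarantees the ascending chain $(\mathfrak{f}\circ\mathfrak{g})^n(\mathcal{V})$ stabilizes, so $\mathfrak{c}$ is well-defined and its fixed points are exactly the closed sets, which coincide with $\mathcal{M}_{\leq\omega}$. It is a general fact that the fixed points of a closure operator on a complete lattice themselves form a complete lattice, where the meet of a family of closed sets is their intersection and the join is the closure of their union. I would therefore (i) show that an arbitrary intersection of closed sets is closed—this follows because $\mathfrak{g}(\bigcap_i\mathcal{V}_i)\subseteq\bigcap_i\mathfrak{g}(\mathcal{V}_i)$ and a short monotonicity argument gives $\mathfrak{f}\mathfrak{g}(\bigcap_i\mathcal{V}_i)\subseteq\bigcap_i\mathcal{V}_i$ together with extensivity—and (ii) define the join of $\{\mathcal{V}_i\}$ as $\mathfrak{c}(\bigcup_i\mathcal{V}_i)$, verifying it is the least closed upper bound. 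The top element is $\mathfrak{c}(\mathcal{C}) = \mathcal{C}$ and the bottom is $\mathfrak{c}(\emptyset)$.

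One fine point I would check carefully is that $\mathcal{M}_{\leq\omega}$ as defined—the image $\{\mathfrak{c}(\mathfrak{f}(V))\mid V\subseteq\Sigma^{\leq\omega}\}$—really coincides with the set of all closed subsets of $\mathcal{C}$, since the complete-lattice argument above is cleanest when phrased in terms of the latter. The inclusion from left to right holds because every $\mathfrak{c}(\mathfrak{f}(V))$ is a fixed point of $\mathfrak{f}\circ\mathfrak{g}$ by idempotence; for the converse, any closed $\mathcal{V}$ satisfies $\mathcal{V} = \mathfrak{f}(\mathfrak{g}(\mathcal{V})) = \mathfrak{c}(\mathfrak{f}(\mathfrak{g}(\mathcal{V})))$, exhibiting it as $\mathfrak{c}(\mathfrak{f}(V))$ with $V = \mathfrak{g}(\mathcal{V})$. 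The main obstacle, if any, is thus bookkeeping rather than depth: making sure the closure operator is correctly shown idempotent (using finiteness to halt the iteration) and that the image characterization matches the fixed-point characterization, after which the complete-lattice structure is the routine fixed-points-of-a-closure-operator theorem.
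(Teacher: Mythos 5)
Your proof is correct, but it takes a genuinely different route from the paper's. You invoke the generic theory of closure operators: verify that $\mathfrak{c}$ is extensive, monotone and idempotent (using finiteness of $\mathcal{C}$ to make the iteration stabilize), identify $\mathcal{M}_{\leq\omega}$ with the set of fixed points of $\mathfrak{c}$ (a bookkeeping step the paper silently assumes, so checking it explicitly is a genuine plus), and then cite the standard fact that the fixed points of a closure operator on a complete lattice form a complete lattice, with meet given by intersection and join given by the closure of the union. The paper instead argues directly and combinatorially: using the explicit characterization of closedness (if $UV^\omega\cap CD^\omega\neq\emptyset$ with $(C,D)$ in the set, then $(U,V)$ is in the set), it shows that arbitrary unions \emph{and} arbitrary intersections of closed sets are again closed. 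The paper's argument is less modular but proves something strictly stronger: $\mathcal{M}_{\leq\omega}$ is a complete sublattice of $\mathcal{P}(\mathcal{C})$, i.e.\ joins are literal unions rather than closures of unions. That extra information is not needed for the lemma as stated, but it is quietly used later --- e.g.\ for $\alpha_{\leq\omega}(V\cup V')=\alpha_{\leq\omega}(V)\cup\alpha_{\leq\omega}(V')$ in Lemma~\ref{lem-galois} and Table~\ref{tab-a}, and for the unions of abstract effects appearing in the typing rules of Figure~\ref{fig-f} to land back in $\mathcal{M}_{\leq\omega}$. Under your generic approach these points would need the additional observation that a union of closed sets is closed, which closure-operator machinery does not give for free (for a general closure operator the union of closed sets need not be closed); the paper's proof delivers exactly that observation.
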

From now on we call $\mathcal{P}(\Sigma^*)$ and $\mathcal{P}(\Sigma^{\leq\omega})$ the \emph{concrete domains} and $\mathcal{M}_*$ and $\mathcal{M}_{\leq\omega}$ the \emph{abstract domains}. We introduce the following \emph{abstraction functions}:
$\alpha_* : \mathcal{P}(\Sigma^*)\to\mathcal{M}_{*}$ and
$\alpha_{\leq\omega} : \mathcal{P}(\Sigma^{\leq\omega})\to\mathcal{M}_{\leq\omega}$,
which are respectively defined as: 
$\alpha_*(U) = \{C\in\mathcal{Q}\mid C\cap U\neq \emptyset\}$
and
$\alpha_{\leq\omega}(V) = \mathfrak{c}(\mathfrak{f}(V))$.
We also introduce the following \emph{concretization functions}: 
$\gamma_* : \mathcal{M}_{*}\to \mathcal{P}(\Sigma^*)$
and
$\gamma_{\leq\omega} : \mathcal{M}_{\leq\omega}\to \mathcal{P}(\Sigma^{\leq\omega})$,
which are respectively defined as:
$\gamma_*(\mathcal{U}) = \bigcup_{C\in\mathcal{U}}C$
and
$\gamma_{\leq\omega}(\mathcal{V}) = \mathfrak{g}(\mathcal{V})$.
\begin{lemma}
\label{lem-galois}
The abstraction and concretization functions are monotone and form Galois connections, that is: 
$\alpha_*(U)\subseteq \mathcal{U}$ iff $U\subseteq\gamma_*(\mathcal{U})$, and 
$\alpha_{\leq\omega}(V)\subseteq \mathcal{V}$ iff $V\subseteq\gamma_{\leq\omega}(\mathcal{V})$.
Moreover, $\alpha_{*}(\gamma_{*}(\mathcal{U})) = \mathcal{U}$ and 
$\alpha_{\leq\omega}(\gamma_{\leq\omega}(\mathcal{V})) = \mathcal{V}$ so we have in fact a Galois \emph{injection}. 
Furthermore, both abstraction and concretization functions preserve
unions, least and greatest elements.  The concretization functions
also preserve intersections.
\end{lemma}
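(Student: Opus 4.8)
The plan is to prove each clause of Lemma~\ref{lem-galois} separately, exploiting the fact that the abstract domains are defined precisely as images of closure/abstraction operators. The overall strategy is to reduce everything to two facts about the pre-abstraction/pre-concretization pair $(\mathfrak{f},\mathfrak{g})$ and the closure $\mathfrak{c}$: that $\mathfrak{g}\circ\mathfrak{f}$ is inflationary on languages and idempotent-up-to-closure, and that $\mathfrak{g}$ is injective on closed sets. I would begin with the star-case, which is simpler since $\mathcal{M}_*=\mathcal{P}(\mathcal{Q})$ carries no closure subtlety.

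First I would establish the monotonicity claims, which are immediate from the definitions: $\alpha_*(U)=\{C\mid C\cap U\neq\emptyset\}$ and $\gamma_*(\mathcal{U})=\bigcup_{C\in\mathcal{U}}C$ are visibly monotone, and likewise $\mathfrak{f},\mathfrak{g}$ and hence $\alpha_{\leq\omega},\gamma_{\leq\omega}$ are monotone. Next I would prove the two Galois-connection biconditionals. For the star-case this is a textbook computation: since the classes in $\mathcal{Q}$ partition $\Sigma^*$ (together with $[\epsilon]$), the condition $\alpha_*(U)\subseteq\mathcal{U}$ says every class meeting $U$ lies in $\mathcal{U}$, which is exactly $U\subseteq\bigcup_{C\in\mathcal{U}}C=\gamma_*(\mathcal{U})$. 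For the $\leq\omega$-case I would unfold $\alpha_{\leq\omega}(V)=\mathfrak{c}(\mathfrak{f}(V))$ and use that $\mathcal{V}$ ranges over closed sets only; the forward direction follows because $\mathfrak{c}$ is the least closed superset and $\mathfrak{f}(V)\subseteq\mathcal{V}$ already implies the closure lands inside the closed $\mathcal{V}$, while the backward direction uses monotonicity of $\mathfrak{f}$ together with $V\subseteq\gamma_{\leq\omega}(\mathcal{V})=\mathfrak{g}(\mathcal{V})$ and the covering property (Lemma~\ref{lem-d}(d)) guaranteeing that $\mathfrak{f}$ of the concretization recovers enough pairs.

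Then I would turn to the Galois-\emph{injection} equalities $\alpha_*\gamma_*=\mathrm{id}$ and $\alpha_{\leq\omega}\gamma_{\leq\omega}=\mathrm{id}$. The star-case is direct: $\alpha_*(\gamma_*(\mathcal{U}))=\{C\mid C\cap\bigcup_{D\in\mathcal{U}}D\neq\emptyset\}$, and since distinct classes are disjoint a class $C$ meets the union iff $C\in\mathcal{U}$. For $\mathcal{M}_{\leq\omega}$ the equality $\alpha_{\leq\omega}(\gamma_{\leq\omega}(\mathcal{V}))=\mathfrak{c}(\mathfrak{f}(\mathfrak{g}(\mathcal{V})))=\mathcal{V}$ holds because every element of $\mathcal{M}_{\leq\omega}$ is by construction closed, so $\mathfrak{f}\circ\mathfrak{g}$ already fixes it and the outer $\mathfrak{c}$ is redundant. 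Finally, preservation of unions, least and greatest elements follows formally: in any Galois connection the lower adjoint ($\alpha$) preserves arbitrary joins and the upper adjoint ($\gamma$) preserves arbitrary meets, and here because it is a Galois \emph{injection} (with $\alpha\gamma=\mathrm{id}$) one gets the extra symmetry; I would verify union-preservation of $\gamma_{\leq\omega}$ by noting $\mathfrak{g}$ distributes over unions of index sets by its very definition, and intersection-preservation from the disjointness/overlap analysis of the $CD^\omega$ blocks.

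\textbf{The hard part} will be the claim that the concretization functions preserve \emph{intersections}, since this is the one property that does not come for free from the abstract adjunction formalism and must genuinely use the structure of the blocks $CD^\omega$. The difficulty is precisely the overlap phenomenon flagged after Lemma~\ref{lem-d}: from $CD^\omega\cap UV^\omega\neq\emptyset$ one cannot infer equality of the two blocks, so $\gamma_{\leq\omega}(\mathcal{V}_1)\cap\gamma_{\leq\omega}(\mathcal{V}_2)=\bigcup_{(C,D)\in\mathcal{V}_1,(U,V)\in\mathcal{V}_2}(CD^\omega\cap UV^\omega)$ need not collapse to a union of single blocks. To handle this I would argue that for closed $\mathcal{V}_1,\mathcal{V}_2$, any word $w$ in the intersection lies in some $C'D'^\omega$ by Lemma~\ref{lem-d}(d), and by closedness both $\mathcal{V}_1$ and $\mathcal{V}_2$ must contain $(C',D')$ once that block meets the respective concretizations; hence the intersection is itself a union of blocks indexed by $\mathcal{V}_1\cap\mathcal{V}_2$, giving $\gamma_{\leq\omega}(\mathcal{V}_1\cap\mathcal{V}_2)=\gamma_{\leq\omega}(\mathcal{V}_1)\cap\gamma_{\leq\omega}(\mathcal{V}_2)$. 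The interplay of closedness with Lemma~\ref{lem-d}(d) is exactly what rescues intersection-preservation despite the block overlaps.
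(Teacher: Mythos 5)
Your proposal is correct, and since the paper offers no written proof of Lemma~\ref{lem-galois} (it is asserted as direct from the definitions, with the real content residing in Lemma~\ref{lem-d} and the closedness machinery), your argument supplies essentially the intended details: disjointness of the classes in $\mathcal{Q}$ handles the $*$-case, and the combination of closedness of elements of $\mathcal{M}_{\leq\omega}$ with the covering property of Lemma~\ref{lem-d}(d) handles the $\leq\omega$-case. In particular your treatment of the ``hard part'' is exactly right: preservation of intersections by $\gamma_{\leq\omega}$ does not follow from the adjunction formalism, and your argument --- cover a word $w\in\mathfrak{g}(\mathcal{V}_1)\cap\mathfrak{g}(\mathcal{V}_2)$ by a block $C'D'^{\omega}$ via Lemma~\ref{lem-d}(d), then use closedness of $\mathcal{V}_1$ and of $\mathcal{V}_2$ separately to conclude $(C',D')\in\mathcal{V}_1\cap\mathcal{V}_2$ --- is precisely what neutralizes the overlap phenomenon the paper flags after Lemma~\ref{lem-d}. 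Your observation that union-preservation of $\gamma$ is not an adjoint fact but holds because $\mathfrak{g}$ distributes over unions by definition (with Lemma~\ref{lem-in} guaranteeing that set-theoretic union really is the join in $\mathcal{M}_{\leq\omega}$) is likewise on target.

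One repair is needed: in the $\leq\omega$ Galois connection you have attached the ingredients to the wrong directions. The direction $\alpha_{\leq\omega}(V)\subseteq\mathcal{V}\Rightarrow V\subseteq\gamma_{\leq\omega}(\mathcal{V})$ is the one that needs Lemma~\ref{lem-d}(d): given $w\in V$, covering yields $(C,D)\in\mathcal{C}$ with $w\in CD^{\omega}$, so $(C,D)\in\mathfrak{f}(V)\subseteq\mathcal{V}$ and hence $w\in\mathfrak{g}(\mathcal{V})$. The converse direction $V\subseteq\mathfrak{g}(\mathcal{V})\Rightarrow\mathfrak{c}(\mathfrak{f}(V))\subseteq\mathcal{V}$ needs no covering at all: monotonicity gives $\mathfrak{f}(V)\subseteq\mathfrak{f}(\mathfrak{g}(\mathcal{V}))=\mathcal{V}$ by closedness, and then $\mathfrak{c}(\mathfrak{f}(V))\subseteq\mathfrak{c}(\mathcal{V})=\mathcal{V}$ because $\mathfrak{c}$ is the least-closed-superset operator. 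All four ingredients (monotonicity, closedness, the least-closed-superset property of $\mathfrak{c}$, and covering) appear in your sketch, so this is a bookkeeping slip rather than a missing idea; but as written, the recipe you give for the ``forward'' direction proves an implication pointing the wrong way and would not go through if executed literally.
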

The next lemma shows that the abstraction is sufficiently fine for our purposes.
It is a direct consequence of the Galois connection and the fact that $L(\mathfrak{A})$ itself is closed which in turn is direct from Lemma~\ref{lem-d} 
(b) and (c). 
\begin{lemma}\label{compless}
Let $p\in\{*,\leq\omega\}$.
If $L\subseteq \Sigma^p$ then 
$\gamma_p(\alpha_p(L))\subseteq L(\mathfrak{A})$ iff $L\subseteq L(\mathfrak{A})$.
\end{lemma}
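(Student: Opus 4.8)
The plan is to prove the biconditional $\gamma_p(\alpha_p(L))\subseteq L(\mathfrak{A})$ iff $L\subseteq L(\mathfrak{A})$ by establishing the two directions separately, leaning on the Galois connection of Lemma~\ref{lem-galois} and the closure properties of $L(\mathfrak{A})$ that follow from Lemma~\ref{lem-d}(b,c).

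For the direction from left to right, the key observation is the \emph{extensivity} of the concretization-after-abstraction composite, namely $L\subseteq\gamma_p(\alpha_p(L))$. This is immediate from the Galois connection: applying the defining adjunction $\alpha_p(L)\subseteq\mathcal{U}$ iff $L\subseteq\gamma_p(\mathcal{U})$ to the instance $\mathcal{U}=\alpha_p(L)$ (with $\alpha_p(L)\subseteq\alpha_p(L)$ holding trivially) yields $L\subseteq\gamma_p(\alpha_p(L))$. Hence if $\gamma_p(\alpha_p(L))\subseteq L(\mathfrak{A})$ then by transitivity $L\subseteq L(\mathfrak{A})$. This direction is purely formal and requires nothing about $\mathfrak{A}$ beyond the Galois connection.

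The substantive direction is right to left. Here I would first record that $L(\mathfrak{A})$ is itself closed in the sense of the B\"uchi abstraction, i.e.\ $\gamma_p(\alpha_p(L(\mathfrak{A})))=L(\mathfrak{A})$. For $p=\mathord{\leq}\omega$ this amounts to saying that whenever $CD^\omega\cap L(\mathfrak{A})\neq\emptyset$ for $(C,D)\in\mathcal{C}$ then $CD^\omega\subseteq L(\mathfrak{A})$, which is exactly Lemma~\ref{lem-d}(c); the analogous statement for $p=*$ using single classes is Lemma~\ref{lem-d}(b). Consequently $\alpha_p$ maps $L(\mathfrak{A})$ to a closed set whose concretization recovers $L(\mathfrak{A})$ exactly. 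Then, assuming $L\subseteq L(\mathfrak{A})$, monotonicity of $\alpha_p$ (Lemma~\ref{lem-galois}) gives $\alpha_p(L)\subseteq\alpha_p(L(\mathfrak{A}))$, and monotonicity of $\gamma_p$ gives
\[
\gamma_p(\alpha_p(L))\subseteq\gamma_p(\alpha_p(L(\mathfrak{A})))=L(\mathfrak{A}),
\]
which is the desired inclusion.

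The only genuine point needing care, and what I expect to be the main obstacle, is verifying the closedness of $L(\mathfrak{A})$, i.e.\ that $\gamma_p(\alpha_p(L(\mathfrak{A})))=L(\mathfrak{A})$. One inclusion is the extensivity already noted; the reverse inclusion $\gamma_p(\alpha_p(L(\mathfrak{A})))\subseteq L(\mathfrak{A})$ is precisely where Lemma~\ref{lem-d}(b,c) is used, since it says that each contributing $CD^\omega$ (respectively each class $C$) that meets $L(\mathfrak{A})$ lies entirely inside it. For $p=\mathord{\leq}\omega$ one must also note that $\alpha_{\leq\omega}$ applies the closure $\mathfrak{c}$, so one should check that $\mathfrak{c}(\mathfrak{f}(L(\mathfrak{A})))$ still concretizes into $L(\mathfrak{A})$; but since $\mathfrak{g}\circ\mathfrak{f}$ is extensive and $L(\mathfrak{A})$ absorbs every overlapping $UV^\omega$ by (c), the closure adds no pairs whose concretization escapes $L(\mathfrak{A})$. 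Once this closedness is in hand the theorem follows by the two short monotonicity arguments above.
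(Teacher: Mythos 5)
Your proof is correct and follows essentially the same route the paper intends: it presents the lemma as a direct consequence of the Galois connection (Lemma~\ref{lem-galois}) together with the closedness of $L(\mathfrak{A})$, which in turn comes from Lemma~\ref{lem-d}(b) and (c), exactly as you argue. One cosmetic repair: for $p=*$ you should apply $\alpha_*$ to $L(\mathfrak{A})\cap\Sigma^*$ rather than to $L(\mathfrak{A})$ itself, since $\alpha_*$ is only defined on subsets of $\Sigma^*$ and $L(\mathfrak{A})$ contains infinite words; the argument goes through unchanged.
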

We now turn to define some new operators for the abstract domains. We have a concatenation operation on $\mathcal{M}_*$ given pointwise, i.e.\ for $\mathcal{U},\mathcal{U}'\in\mathcal{M}_*$, we define $\mathcal{U}\cdot\mathcal{U}'=\{UU'\mid U\in\mathcal{U}, U'\in\mathcal{U}'\}$. We also define a concatenation  $\cdot : \mathcal{M}_*\times\mathcal{M}_{\leq\omega}\to\mathcal{M}_{\leq\omega}$ as follows: 
$\mathcal{U}\cdot \mathcal{V} = \{(AC,D)\mid A\in\mathcal{U} \wedge (C,D)\in\mathcal{V}\}$. Note that $ACD=AC$. 
\begin{lemma}
\label{lem-con}
If $\mathcal{U}\in\mathcal{M}_*$ and  $\mathcal{V}\in\mathcal{M}_{\leq\omega}$ then $\mathcal{U}\cdot\mathcal{V}\in \mathcal{M}_{\leq\omega}$. 
\end{lemma}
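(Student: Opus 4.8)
The plan is to show that $\mathcal{U}\cdot\mathcal{V}$ is \emph{closed}, since $\mathcal{M}_{\leq\omega}$ is exactly the set of closed subsets of $\mathcal{C}$ (a closed set is by definition a fixpoint of $\mathfrak{f}\circ\mathfrak{g}$, and conversely every $\mathfrak{c}(\mathfrak{f}(V))$ is closed, so the image of $\alpha_{\leq\omega}$ coincides with the closed sets). First I would record that $\mathcal{U}\cdot\mathcal{V}\subseteq\mathcal{C}$: for $A\in\mathcal{U}$ and $(C,D)\in\mathcal{V}$ the pair $(AC,D)$ is linked because $(AC)D=A(CD)=AC$ (using $CD=C$) and $DD=D$ --- this is the point of the stated remark $ACD=AC$. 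The inclusion $\mathcal{U}\cdot\mathcal{V}\subseteq\mathfrak{f}(\mathfrak{g}(\mathcal{U}\cdot\mathcal{V}))$ is automatic, since each $(AC)D^\omega$ is nonempty, so the entire content is the reverse inclusion $\mathfrak{f}(\mathfrak{g}(\mathcal{U}\cdot\mathcal{V}))\subseteq\mathcal{U}\cdot\mathcal{V}$, which is precisely the closedness condition.

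Unfolding this, I would assume $(U',V')\in\mathcal{C}$ together with a word $w\in U'V'^\omega\cap(AC)D^\omega$ for some $A\in\mathcal{U}$ and $(C,D)\in\mathcal{V}$, and aim to produce a factorisation witnessing $(U',V')\in\mathcal{U}\cdot\mathcal{V}$. Concretely, since $\mathcal{V}$ is closed, it suffices to exhibit a linked pair $(C'',V')\in\mathcal{C}$ with $C''V'^\omega\cap CD^\omega\neq\emptyset$ (which forces $(C'',V')\in\mathcal{V}$) together with some $A'\in\mathcal{U}$ satisfying $U'=A'C''$; then $(U',V')=(A'C'',V')\in\mathcal{U}\cdot\mathcal{V}$. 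The data available are the two factorisations $w=s\,t_1t_2\cdots$ with $s\in U'$ and $t_j\in V'$, and $w=p\,r_1r_2\cdots$ with $p\in AC$ and $r_i\in D$.

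The main obstacle is that these two factorisations need not be aligned, and that the \emph{class} product $AC$ is in general strictly larger than the \emph{language} product of $A$ and $C$, so I cannot simply split $p$ into an $A$-part and a $C$-part. My plan to sidestep this is to absorb enough $V'$-blocks into the left factor: choosing $k$ with $|s\,t_1\cdots t_k|\geq|p|$ (possible once $V'\neq[\epsilon]$, as the $t_j$ are then nonempty) and writing $x=s\,t_1\cdots t_k=p\xi$, I obtain $[x]=U'V'^k=U'$ and $r_1r_2\cdots=\xi\,t_{k+1}t_{k+2}\cdots$. Setting $C'=C[\xi]$ gives $U'=(AC)[\xi]=AC'$, but $(C',V')$ need not be linked; the repair is to take $C''=C'V'$, which \emph{is} linked with $V'$ (by $V'V'=V'$) and still satisfies $AC''=AC'V'=U'V'=U'$. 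Finally I would verify the overlap with the explicit witness $w^\dagger=c\,\xi\,t_{k+1}\cdot t_{k+2}t_{k+3}\cdots$ for any $c\in C$: its prefix has class $C'V'=C''$ and its tail lies in $V'^\omega$, while $w^\dagger=c\,(\xi\,t_{k+1}t_{k+2}\cdots)=c\,(r_1r_2\cdots)\in CD^\omega$, so $w^\dagger\in C''V'^\omega\cap CD^\omega$. The degenerate cases $V'=[\epsilon]$ or $D=[\epsilon]$ I would dispatch separately: there $w$ is finite on one side and infinite on the other unless both classes equal $[\epsilon]$, in which case $(U',V')=(AC,D)$ already lies in $\mathcal{U}\cdot\mathcal{V}$. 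Notably, this argument should need only the idempotency packaged into the definition of $\mathcal{C}$ and the congruence property of $\sim$, and not the Ramsey-based Lemma~\ref{lem-d}(d), even though aligning factorisations is exactly where one would first expect Ramsey to be required.
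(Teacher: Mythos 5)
Your proof is correct and takes essentially the same route as the paper's own: show that $\mathcal{U}\cdot\mathcal{V}$ is closed, absorb $V'$-blocks into the prefix using $U'V'=U'$, and reduce to closedness of $\mathcal{V}$ --- your pair $(C'',V')=(C[\xi]V',V')$ is literally the paper's $(YV,V)$, and your final step $U'=AC''$ is the paper's $U=AYV$. If anything, your write-up is more careful at the one delicate point: the paper's phrase ``decomposing the witnessing word, we get finite words $x,y$ where $xy\in U$ and $x\in A$'' is not literally available, since (as you note) the class product $AC$ may properly contain the language product, so the witness need have no prefix in $A$, nor need its $U$-prefix reach past the $AC$-part; your block-absorption $[s\,t_1\cdots t_k]=U'V'^k=U'$ together with the synthetic witness $w^\dagger=c\,\xi\,t_{k+1}\,t_{k+2}\cdots$ (replacing the unsplittable prefix by an arbitrary $c\in C$) is exactly the repair this step needs, and you are right that only idempotency of linked pairs and the congruence property of $\sim$ are used, with no appeal to the Ramsey-based Lemma~\ref{lem-d}(d).
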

\begin{theorem}\label{prese}
The preservation properties listed in Table~\ref{tab-a} are valid.
\end{theorem}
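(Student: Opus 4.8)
The plan is to prove the theorem row by row, since Table~\ref{tab-a} is a list of essentially independent commutation identities relating the abstract operations on $\mathcal{M}_*$ and $\mathcal{M}_{\leq\omega}$ to the concrete operations on $\mathcal{P}(\Sigma^*)$ and $\mathcal{P}(\Sigma^{\leq\omega})$ through the Galois injection of Lemma~\ref{lem-galois}. I would group the entries by increasing difficulty: (i) union; (ii) concatenation; (iii) least fixpoint and infinite iteration $(-)^\omega$. For each abstraction identity of the shape $\alpha(F(\ldots)) = F^{\#}(\alpha(\ldots))$ I would first establish the commutation for $\alpha$ and then read off the corresponding $\gamma$-identity almost for free: because Lemma~\ref{lem-galois} yields a Galois \emph{injection}, $\alpha_p\circ\gamma_p = \mathrm{id}$ and $\alpha_p$ is onto, so every abstract value is $\alpha_p(L)$ for some concrete $L$, and applying the $\alpha$-identity to such representatives together with $\gamma_p\circ\alpha_p\circ\gamma_p = \gamma_p$ gives the concretization side.

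The union entries are immediate, since Lemma~\ref{lem-galois} already records that $\alpha_*$ and $\alpha_{\leq\omega}$ preserve unions (being lower adjoints, they are complete join-morphisms), whence $\alpha(L_1\cup L_2)=\alpha(L_1)\cup\alpha(L_2)$ by the definition of join in the abstract lattices. The concatenation entries I would prove by unfolding the definitions and using that $\sim$ is a congruence, i.e.\ $w\sim u \wedge w'\sim u' \Rightarrow ww'\sim uu'$, which makes $\mathcal{Q}$ a monoid. For $\alpha_*(U_1 U_2)=\alpha_*(U_1)\cdot\alpha_*(U_2)$ the inclusion $\supseteq$ is witnessed by concatenating words from the two factors, while for $\subseteq$ one takes any $E\in\alpha_*(U_1U_2)$, a witness $w=w_1w_2\in E$ with $w_i\in U_i$, and observes $E=[w]=[w_1][w_2]$ with $[w_i]\in\alpha_*(U_i)$ (the empty-word class $[\epsilon]$ is the unit and is absorbed by the monoid law). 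The mixed concatenation $\mathcal{M}_*\times\mathcal{M}_{\leq\omega}\to\mathcal{M}_{\leq\omega}$ is analogous at the level of pairs $(C,D)$, with Lemma~\ref{lem-con} guaranteeing that the result is again closed and hence a genuine element of $\mathcal{M}_{\leq\omega}$.

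The least-fixpoint entry follows from a standard fixpoint-transfer argument. Since the concrete operator arising from a recursive definition is built solely from unions and concatenations, both of which are preserved exactly, the abstract operator commutes with $\alpha_*$ on the nose; combined with continuity of $\alpha_*$ (preservation of unions) and preservation of least elements, this gives $\alpha_*(\mathrm{lfp}\,F)=\mathrm{lfp}\,F^{\#}$, which is routine given completeness of the lattices (Lemma~\ref{lem-in}). The same reasoning covers the Kleene-star component $A^*$ appearing in the recursion rule.

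The real obstacle is the infinite-iteration entry, which must compute $\alpha_{\leq\omega}(A^\omega)$, equivalently $\mathfrak{c}(\mathfrak{f}(A^\omega))$, from the finite datum $\alpha_*(A)$ alone. The difficulty is that $A^\omega$ is a genuinely infinitary object whereas the classes in $\mathcal{Q}$ record only finite-word reachability and final-visit information, so a priori the $\omega$-behaviour of $A$ need not be determined by which classes meet $A$. This is exactly where the Ramsey factorization of Lemma~\ref{lem-d}(d) is indispensable: it guarantees that every $w\in A^\omega$ lies in some $CD^\omega$ with $CD=C$ and $DD=D$, so that the $\omega$-content of $A$ is carried entirely by idempotent factorizations built from classes meeting $A$. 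I would encapsulate the remaining combinatorial step—characterizing precisely which idempotent pairs $(C,D)$ satisfy $CD^\omega\cap A^\omega\neq\emptyset$ in terms of products of classes drawn from $\alpha_*(A)$—in the combinatorial lemma (Lemma~\ref{krit}), and then close the resulting set of pairs under $\mathfrak{f}\circ\mathfrak{g}$ to obtain $\mathfrak{c}(\mathfrak{f}(A^\omega))$. That no spurious information survives this passage is underwritten by the covering properties Lemma~\ref{lem-d}(b),(c), which ensure that each $CD^\omega$ either misses or is contained in the language of interest, so overlaps of the $CD^\omega$ cannot corrupt the abstract $\omega$-operation.
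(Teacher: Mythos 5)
Your handling of the easy rows (union, concatenation, least fixpoint) matches the paper, which likewise dismisses them as direct/folklore plus a standard fixpoint-transfer argument, and you correctly isolate $(-)^\omega$ as the only hard entry. But your sketch of that entry has a genuine gap. The step you propose to delegate to Lemma~\ref{krit} --- ``characterizing precisely which idempotent pairs $(C,D)$ satisfy $CD^\omega\cap A^\omega\neq\emptyset$ in terms of products of classes drawn from $\alpha_*(A)$'' --- is not a lemma one can cite; it is a restatement of the theorem itself (that $\alpha_{\leq\omega}(A^\omega)$ is determined by $\alpha_*(A)$ is exactly what is at stake). The paper's Lemma~\ref{krit} asserts something different and genuinely usable: for any family of classes $(L_i)_i$, the \emph{entire} product $\prod_i L_i$ is contained in a \emph{single} $U'{V'}^\omega$ with $U'V'=U'$ and $V'V'=V'$. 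That uniformity is what powers the hard inclusion $\alpha_*(L)^\omega\subseteq\alpha_{\leq\omega}(L^\omega)$: taking $(U,V)$ with $UV^\omega$ meeting $\gamma_*(\alpha_*(L))^\omega$ (the general case reduces to this because the target set is closed), one decomposes a witness $w=w_1w_2\dots$ with $w_i\in\gamma_*(\alpha_*(L))$, applies Lemma~\ref{krit} to $L_i=[w_i]$, and then \emph{replaces} each $w_i$ by some $w_i'\in[w_i]\cap L$ (possible by the definition of $\alpha_*$). The substituted word lies in $L^\omega$ and, crucially, in the \emph{same} $U'{V'}^\omega$ as $w$, so $(U',V')\in\mathfrak{f}(L^\omega)$; then closedness of $\alpha_{\leq\omega}(L^\omega)$ together with $UV^\omega\cap U'{V'}^\omega\neq\emptyset$ pulls $(U,V)$ into $\alpha_{\leq\omega}(L^\omega)$. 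This substitution-plus-overlap-transfer argument is the heart of the proof and is absent from your proposal; without it the inclusion does not follow.

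Moreover, your stated reason why ``overlaps of the $CD^\omega$ cannot corrupt the abstract $\omega$-operation'' is a misapplication of Lemma~\ref{lem-d}. Parts (b) and (c) of that lemma say that each class and each $CD^\omega$ is either contained in or disjoint from $L(\mathfrak{A})$ --- the fixed policy language --- not from an arbitrary ``language of interest'' such as $A^\omega$ or $L^\omega$. For arbitrary languages no such dichotomy holds, and the paper stresses exactly this right after Lemma~\ref{lem-d}: the sets $CD^\omega$ may nontrivially overlap, and $CD^\omega\cap UV^\omega\neq\emptyset$ does not imply $CD^\omega=UV^\omega$. This failure is precisely why $\mathcal{M}_{\leq\omega}$ must consist of \emph{closed} sets and why the argument has to route membership through the closure operator $\mathfrak{c}$ (as in the transfer step above) rather than through any (b),(c)-style dichotomy. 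In short, the overlap problem you wave away is the very problem the paper's proof of this entry is designed to solve.
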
 
  Most of these properties are direct and folklore or have been
  asserted earlier. In the clause about least fixpoints denoted by
  $\textit{lfp}$ the operators $\Phi$ and $F$ are supposed to be
  monotone operators on $\mathcal{P}(\Sigma^*)^n$ and
  $\mathcal{M}_*^n$ for some $n$. The clause is then validated by
  straightforward application of lattice-theoretic principles. Only
  preservation of $(-)^\omega$ is a nontrivial and original result; it
  requires the following lemma.
\begin{table} 
{\scriptsize
\begin{center}
\begin{tabular}{ll}
$\alpha_* : \mathcal{P}(\Sigma^*)\to\mathcal{M}_{*}$ &
$\alpha_{\leq\omega} : \mathcal{P}(\Sigma^{\leq\omega})\to\mathcal{M}_{\leq\omega}$\\
$\alpha_*(\top) = \top$ &
$\alpha_{\leq\omega}(\top) = \top$\\
$\alpha_*(\bot) = \bot$ &
$\alpha_{\leq\omega}(\bot) = \bot$\\ 
$\alpha_*(U\cup U') = \alpha_*(U)\! \cup\! \alpha_*(U')$ &
$\alpha_{\leq\omega}(V\cup V') = \alpha_{\leq\omega}(V)\! \cup\! \alpha_{\leq\omega}(V')$\\ 
$\alpha_*(U\cdot U') = \alpha_*(U) \cdot \alpha_*(U')$ &
$\alpha_{\leq\omega}(U\cdot V) = \alpha_{*}(U) \cdot \alpha_{\leq\omega}(V)$\\ 
$\alpha_*\circ \Phi = F\circ\alpha_*\Longrightarrow \alpha_*(\textit{lfp}(\Phi))=\textit{lfp}(F)\qquad$ & 
$\alpha_{\leq\omega}(U^\omega) = \alpha_{*}(U)^\omega$ \\ 
\end{tabular}
\end{center}
}
\caption{Properties of the B\"{u}chi Abstraction}
\label{tab-a}
\end{table} 

\begin{lemma}\label{krit}
Let $(L_i)_{i\in I}$ be a family of classes (from $\mathcal{Q}$)
and put $P=\prod_{i\in I} L_i \subseteq \Sigma^{\leq\omega}$, i.e., $P$ comprises finite or infinite words of the form $w_1w_2w_3\dots$ where $w_i\in L_i$ for $i\geq 1$. There exist classes $U,V\in\mathcal{Q}$ where $UV=U, VV=V$ such that $P \subseteq UV^\omega$. 
\end{lemma}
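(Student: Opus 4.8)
The plan is to reduce the statement to a Ramsey-theoretic argument inside the finite monoid $\mathcal{Q}$. Recall from Lemma~\ref{lem-d}(a) and the discussion preceding it that $\mathcal{Q}$ is a \emph{finite} monoid under concatenation of classes, with identity $[\epsilon]$, and that concatenation of classes is well-defined (so if $w_i\in L_i$ for $i=1,\dots,k$ then $w_1\cdots w_k$ lies in the single class $L_1\cdots L_k$). The observation that organises everything is a dichotomy: since $L_i\neq[\epsilon]$ forces $L_i\subseteq\Sigma^+$ and hence $w_i\neq\epsilon$, while $L_i=[\epsilon]$ forces $w_i=\epsilon$, the number of nonempty letters of a word $w\in P$ equals $|\{i\mid L_i\neq[\epsilon]\}|$ independently of the choices; thus \emph{every} word of $P$ is finite if cofinitely many $L_i$ equal $[\epsilon]$, and \emph{every} word of $P$ is infinite otherwise.

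First I would dispose of the finite case. If $L_i=[\epsilon]$ for all but finitely many $i$, then the finite product $U:=\prod_{i:\,L_i\neq[\epsilon]}L_i$ is a single class and every $w\in P$ lies in $U$. Taking $V:=[\epsilon]$ gives $VV=V$ and $UV=U$ (as $[\epsilon]$ is the identity), and since $[\epsilon]^\omega=\{\epsilon\}$ we get $UV^\omega=U\supseteq P$, as required. This also covers the case of a finite index set $I$.

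The substance is the infinite case, where infinitely many $L_i$ are nontrivial. Here I would set up the standard Ramsey colouring: for natural numbers $0\le m<n$ colour the pair $\{m,n\}$ by the product $c(m,n):=L_{m+1}L_{m+2}\cdots L_n\in\mathcal{Q}$, a colouring with finitely many colours satisfying the cocycle identity $c(m,n)\,c(n,p)=c(m,p)$ for $m<n<p$ by associativity. By the infinite Ramsey theorem there is an infinite set $H=\{h_0<h_1<\cdots\}$ with $c(h_i,h_j)=V$ for all $i<j$; then $V=c(h_0,h_2)=c(h_0,h_1)\,c(h_1,h_2)=VV$, so $V$ is idempotent. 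To obtain the prefix condition I would then stabilise the initial segment: among the finitely-many-valued quantities $c(0,h_i)$ some value $U$ recurs on an infinite subset, and restricting $H$ to the corresponding cut points $m_0<m_1<\cdots$ keeps $c(m_a,m_b)=V$ while forcing $c(0,m_a)=U$; for $a<b$ this yields $U=c(0,m_b)=c(0,m_a)\,c(m_a,m_b)=UV$, so $UV=U$ and $(U,V)\in\mathcal{C}$.

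It remains to verify the covering $P\subseteq UV^\omega$, which is then routine. Given any $w=w_1w_2\cdots\in P$ (necessarily infinite here), I factor it at the cut points: the prefix $w_1\cdots w_{m_0}$ lies in the class $c(0,m_0)=U$, and each block $w_{m_a+1}\cdots w_{m_{a+1}}$ lies in $c(m_a,m_{a+1})=V$; since infinitely many $L_i$ are nontrivial the blocks are genuinely nonempty, so $w=u\,v_0v_1v_2\cdots$ with $u\in U$ and each $v_a\in V$, i.e.\ $w\in UV^\omega$. The main obstacle I anticipate is not any single computation but forcing a \emph{single} pair $(U,V)$ to cover \emph{all} of $P$ at once; this is exactly what the combination of Ramsey's theorem (for the idempotent $V$) with the pigeonhole stabilisation of the prefix (for the absorbing $U$) delivers, and the only bookkeeping subtlety is the role of the empty-word class $[\epsilon]$, which is precisely why the finite/infinite dichotomy is isolated at the outset.
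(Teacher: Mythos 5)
Your proof is correct and takes essentially the same route as the paper's: the finite/infinite dichotomy handled via $[\epsilon]$, then Ramsey's theorem applied to the finite colouring by $\sim$-class products to extract an idempotent $V$ and an absorbing prefix class $U$ covering all of $P$ at once. The only cosmetic differences are that the paper applies Ramsey to the factorization of one chosen word $w\in P$ and then transfers to all of $P$ via well-definedness of class concatenation (the very fact your colouring $c(m,n)=L_{m+1}\cdots L_n$ builds in from the start), and that the paper gets $UV=U$ for free by absorbing the first $V$-block into the prefix $U$, whereas you obtain it by an extra pigeonhole stabilisation.
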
 
\begin{proof}
Let $w \in P$ and write
$w=w_1w_2w_3\cdots w_i\cdots$ where $w_i\in L_i$. If $w$ is a finite word then there exists $n$ such that $w_i=\epsilon$ (and $L_i=[\epsilon]$) 
for $i\geq n$ and we can choose $U=L_1\dots L_{n-1}$ and $V=[\epsilon]$. Otherwise,  use Ramsey's theorem as in the proof of Lemma~\ref{lem-d} to obtain a sequence of indices $i_1 < i_2
  < i_3 < i_4 < \dots$ and classes $U,V$ where $V\neq[\epsilon]$ and $VV=V$, $UV=U$
  such that
$w_1w_2\dots w_{i_1}\in U$,
$w_{i_1+1}\dots w_{i_2}\in V$ and
$w_{i_2+1}\dots w_{i_3}\in V$ and so on.
It follows that $U=L_1L_2\dots L_{i_1}$ and $V=L_{i_k+1}\dots L_{i_{k+1}}$ for $k\geq 1$ and thus $P\subseteq UV^\omega$ as required. 
\end{proof}

\begin{proof}[of Theorem~\ref{prese}]
It only remains to prove preservation of $(-)^\omega$. So, fix $L\subseteq \Sigma^*$. 
We want to show that $\alpha_{\leq\omega}(L^\omega)=\alpha_{*}(L)^\omega$. Note that $\alpha_{*}(L)^\omega=\alpha_{\leq\omega}(\gamma_{*}(\alpha_{*}(L))^\omega)$. The direction ``$\subseteq$'' is obvious from monotonicity; towards proving ``$\supseteq$'' assume $(U,V)\in\alpha_{\leq\omega}(\gamma_*(\alpha_*(L))^\omega)$. Since $\alpha_{\leq\omega}(L^\omega)$ is closed, we may without loss of generality assume that $UV^\omega\cap \gamma_*(\alpha_*(L))^\omega\neq\emptyset$. Pick $w\in UV^\omega\cap \gamma_*(\alpha_*(L))^\omega$ and decompose $w=w_1w_2\dots$ where $w_i\in \gamma_*(\alpha_*(L))$. Define $L_i:=[w_i]$ and apply Lemma~\ref{krit} to obtain $U',V'$ with 
 $\prod_i L_i\subseteq U'{V'}^\omega$. Note that, since $w\in P$, we have $UV^\omega\cap U'{V'}^\omega\neq\emptyset$. 

Now, since $w_i\in\gamma_*(\alpha_*(L))$, by the definition of $\alpha_*$, we must have that $L_i\cap L\neq\emptyset$. Choose $w_i'\in L_i\cap L$. The word $w_1'w_2'\dots$ is then contained in $L^\omega\cap U'{V'}^\omega$, so $(U',V')\in\alpha_{\leq\omega}(L^\omega)$ and, finally, $(U,V)\in \alpha_{\leq\omega}(L^\omega)$ since $\alpha_{\leq\omega}(L^\omega)$ is closed and  $UV^\omega\cap U'{V'}^\omega\neq\emptyset$. 
\end{proof}

\label{bet}
\begin{definition}[B\"{u}chi Effect]
Let $\mathcal{U}$ be an element in $\mathcal{M}_*$ and $\mathcal{V}$ be an element in $\mathcal{M}_{\leq\omega}$. The pair $(\mathcal{U},\mathcal{V})$ is a B\"{u}chi effect of a given expression $e$ if it satisfies:
(a) if $e \downarrow w$, then $w\in \gamma_*(\mathcal{U})$;
(b) if $e \uparrow w$, then $w\in \gamma_{\leq\omega}(\mathcal{V})$.
\end{definition}
Let $\mathcal{V}(\mathfrak{X})$ range over expressions of the form:
$\bigcup_{X\in\mathfrak{X}} (\mathcal{A}_X\cdot X)\ \cup\ \mathcal{B}$
with $\mathcal{A}_X \in \mathcal{M}_*$ and $\mathcal{B}\in \mathcal{M}_{\leq\omega}$. We define the notation $\mathcal{V}(\mathfrak{X}-X)$ and operations on these expressions in the same way as we have done for expressions in the type and effect system in section \ref{sec-e}. The definitions of justifiedness and satisfaction of environments and assignments are adapted to the B\"uchi type system mutatis mutandis. That is, $\Delta$ is justified if 
for all $f\ \&\ (\mathcal{U}, X)$ in $\Delta$ one has
$\Delta\vdash e_f\ \&\ (\mathcal{U}, \mathcal{A}\cdot X\cup \mathcal{V}(\mathfrak{X}-X))$. It is satisfied by $\eta$ if $f\ \&\ (\mathcal{U}, X)$ in $\Delta$ and 
$f\uparrow w$ implies $w\in \gamma_{\leq\omega}(\eta(X))$.

With the above definitions, we introduce the B\"{u}chi type and effect system in Figure \ref{fig-f}.
\begin{figure}[h]
{\scriptsize \[
\inferrule
{ }
{\Delta \vdash_{\mathfrak{A}} o(a)\ \&\ (\alpha_*(\{a\}),\emptyset)}
\quad 
\inferrule
{
{\Delta \vdash_{\mathfrak{A}} e_1\ \&\ (\mathcal{U}_1, \mathcal{V}_1(\mathfrak{X}))}\qquad
{\Delta \vdash_{\mathfrak{A}} e_2\ \&\ (\mathcal{U}_2, \mathcal{V}_2(\mathfrak{X}))}
}
{\Delta \vdash_{\mathfrak{A}} e_1\,;\,e_2\ \&\ (\mathcal{U}_1\cdot \mathcal{U}_2, \mathcal{V}_1(\mathfrak{X}) \cup \mathcal{U}_1\cdot \mathcal{V}_2(\mathfrak{X}))}
\]\[
\inferrule
{
{\Delta \vdash_{\mathfrak{A}} e_1\ \&\ (\mathcal{U}_1, \mathcal{V}_1(\mathfrak{X}))}\qquad
{\Delta \vdash_{\mathfrak{A}} e_2\ \&\ (\mathcal{U}_2, \mathcal{V}_2(\mathfrak{X}))}
}
{\Delta \vdash_{\mathfrak{A}} e_1\,?\,e_2\ \&\ (\mathcal{U}_1\cup \mathcal{U}_2, \mathcal{V}_1(\mathfrak{X})\cup \mathcal{V}_2(\mathfrak{X}))}
\quad
\inferrule
{
}
{\Delta, f\ \&\ (\mathcal{U},X) \vdash_{\mathfrak{A}} f\ \&\ (\mathcal{U},X)}
\]\[
\inferrule
{
{\Delta, f\ \&\ (\mathcal{U}, X) \vdash_{\mathfrak{A}} e_f\ \&\ (\mathcal{U}, \mathcal{A}\cdot X\cup \mathcal{V}(\mathfrak{X}-X))}
}
{\Delta \vdash_{\mathfrak{A}} f\ \&\ (\mathcal{U}, \mathcal{A}^*\cdot \mathcal{V}(\mathfrak{X}-X)\cup \mathcal{A}^\omega)}
\]}
\caption{The B\"{u}chi Type and Effect System}
\label{fig-f}
\end{figure}

By using properties of the B\"{u}chi abstraction in Table \ref{tab-a}, from Theorems \ref{thm-a} and \ref{thm-ca}, we have that this system is  sound and complete.
\begin{theorem}[Soundness]
\label{thm-c}
Given an environment $\Delta$ and an assignment function $\eta$ satisfying that $\eta \models_\mathfrak{A} \Delta$, for all derivations: 
$\Delta\vdash_\mathfrak{A} e\ \&\ (\mathcal{U}, \mathcal{V}(\mathfrak{X}))$
of an expression $e$, we have:
$e\downarrow w$ implies $w \in \gamma_*(\mathcal{U})$
and
$e\uparrow w$ implies $w\in \gamma_{\leq\omega}(\mathcal{V}(\mathfrak{X})_\eta)$.
\end{theorem}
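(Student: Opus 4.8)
The plan is to deduce this soundness statement from the soundness of the infinitary system (Theorem~\ref{thm-a}) by \emph{concretizing} B\"uchi derivations, rather than by repeating the induction from scratch. The guiding observation is that every rule of Figure~\ref{fig-f} is the image, under the abstraction maps $\alpha_*,\alpha_{\leq\omega}$, of the corresponding rule of Figure~\ref{fig-e}, and that the Galois connection (Lemma~\ref{lem-galois}) together with the preservation properties of Table~\ref{tab-a} lets one push the concretization functions $\gamma_*,\gamma_{\leq\omega}$ through an entire derivation while only ever \emph{over}-approximating. Since soundness is exactly an over-approximation statement, only the easy (extensivity and monotonicity) directions of these properties are needed; the genuinely hard equality direction of $\alpha_{\leq\omega}(U^\omega)=\alpha_*(U)^\omega$ is what will later be needed for completeness (Theorem~\ref{thm-cb}), not here.

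Concretely, I would first fix the relevant concretizations: for an environment $\Delta$ with bindings $f\ \&\ (\mathcal{U},X)$ let $\gamma(\Delta)$ be the infinitary environment with bindings $f\ \&\ (\gamma_*(\mathcal{U}),X)$; for the assignment $\eta$ set $\bar\eta=\gamma_{\leq\omega}\circ\eta$; and for a B\"uchi effect expression $\mathcal{V}(\mathfrak{X})=\bigcup_X(\mathcal{A}_X\cdot X)\cup\mathcal{B}$ let its concretization be $\bigcup_X(\gamma_*(\mathcal{A}_X)\cdot X)\cup\gamma_{\leq\omega}(\mathcal{B})$. The core step is a \emph{concretization lemma}, proved by induction on the B\"uchi derivation: from $\Delta\vdash_{\mathfrak{A}} e\ \&\ (\mathcal{U},\mathcal{V}(\mathfrak{X}))$ one extracts an infinitary derivation $\gamma(\Delta)\vdash e\ \&\ (U',V'(\mathfrak{X}))$ with $U'\subseteq\gamma_*(\mathcal{U})$ and with each coefficient of $V'(\mathfrak{X})$ contained in the corresponding coefficient of the concretization of $\mathcal{V}(\mathfrak{X})$. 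Each case invokes the matching preservation entry: extensivity $\{a\}\subseteq\gamma_*(\alpha_*(\{a\}))$ for $o(a)$, and preservation of union (Lemma~\ref{lem-galois}) together with the fact that $\gamma_{\leq\omega}$ over-approximates the abstract concatenation for the rules for $;$ and $?$. I would then check that $\eta\models_{\mathfrak{A}}\Delta$ implies $\bar\eta\models\gamma(\Delta)$: justifiedness of $\gamma(\Delta)$ follows by applying the concretization lemma to the derivations witnessing justifiedness of $\Delta$ (the B\"uchi analogue of Lemma~\ref{lem-a}), and satisfaction holds since $f\uparrow w$ gives $w\in\gamma_{\leq\omega}(\eta(X))=\bar\eta(X)$. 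Finally, applying Theorem~\ref{thm-a} to the concretized derivation yields $e\downarrow w\Rightarrow w\in U'\subseteq\gamma_*(\mathcal{U})$ and $e\uparrow w\Rightarrow w\in V'(\bar\eta)$, and one closes the argument with $V'(\bar\eta)\subseteq\gamma_{\leq\omega}(\mathcal{V}(\mathfrak{X})_\eta)$, which holds because $\gamma_{\leq\omega}$ preserves unions and over-approximates concatenation.

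The step I expect to be the main obstacle is the recursion rule, for two reasons. First, the terminating component is \emph{guessed}: the rule forces $e_f$ to be typed with exactly the $\mathcal{U}$ of the binding, whereas the concretization lemma only delivers a body-type $U'\subseteq\gamma_*(\mathcal{U})$, so the concretized derivation does not literally instantiate the infinitary recursion rule at $\gamma_*(\mathcal{U})$. I would resolve this semantically rather than by matching derivations: since abstract concatenation, union and $(-)^*$ over-approximate their concrete counterparts under $\gamma_*$, the set $\gamma_*(\mathcal{U})$ is a pre-fixpoint of the concrete operator $\Phi$ from Theorem~\ref{thm-ca}, so by Knaster--Tarski $\textit{lfp}(\Phi)\subseteq\gamma_*(\mathcal{U})$, and $\textit{lfp}(\Phi)=\{w\mid f\downarrow w\}$ gives the required containment; this is the role played by the least-fixpoint clause of Table~\ref{tab-a}. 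Second, the nonterminating component produces $\mathcal{A}^*\cdot\mathcal{V}(\mathfrak{X}-X)\cup\mathcal{A}^\omega$, so one must verify $\gamma_*(\mathcal{A})^\omega\subseteq\gamma_{\leq\omega}(\mathcal{A}^\omega)$ and the analogue for $\mathcal{A}^*$; for soundness these again reduce to extensivity $\gamma_{\leq\omega}(\alpha_{\leq\omega}(L))\supseteq L$ of the Galois injection, so the hard direction of $(-)^\omega$-preservation (Lemma~\ref{krit}) is \emph{not} needed here. The one place where care is genuinely required is checking that the greatest-fixpoint reading of the recursion rule interacts correctly with this over-approximation, i.e.\ that the B\"uchi analogue of Lemma~\ref{lem-b} still goes through under the concretized assignment $\bar\eta$.
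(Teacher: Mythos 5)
Your high-level direction (transfer soundness from Theorem~\ref{thm-a} through the Galois connection, using only extensivity and lax commutation of $\gamma$) is the same one the paper gestures at, but the specific reduction you build does not go through, and the failure is not confined to the spot you flagged. The infinitary system of Figure~\ref{fig-e} has no subsumption rule, so derived $U$-components are computed rigidly, and \emph{both} the recursion rule \emph{and} the definition of justifiedness demand that the body $e_f$ be typed with \emph{exactly} the $U$ of the binding. Your concretization lemma can only deliver $U'\subseteq\gamma_*(\mathcal{U})$, and this inclusion is strict in general because $\gamma_*$ commutes with concatenation only laxly: $\gamma_*(\mathcal{U}_1)\cdot\gamma_*(\mathcal{U}_2)\subseteq\gamma_*(\mathcal{U}_1\cdot\mathcal{U}_2)$, typically properly. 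Concretely, take $f = o(a)\ ?\ (o(b)\,;\,f)$ and the automaton of the paper's first example: the justified abstract binding is $\mathcal{U}=\{[a],[ba]\}$, so $\gamma_*(\mathcal{U})$ is the set of all nonempty words ending in $a$ and contains $aa$; but under the environment binding $f$ to $\gamma_*(\mathcal{U})$ the body's derived concrete $U$-component is forced to be $\{a\}\cup\{b\}\cdot\gamma_*(\mathcal{U})$, which does not contain $aa$. Hence $\gamma(\Delta)$ is \emph{not} justified, so your step (3) (``$\eta\models_{\mathfrak{A}}\Delta$ implies $\bar\eta\models\gamma(\Delta)$'') is false, and Theorem~\ref{thm-a} --- whose hypothesis is precisely $\bar\eta\models\gamma(\Delta)$ --- can never be invoked. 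You noticed the exactness problem for the recursion rule inside a derivation, but not that it equally destroys the justifiedness claim on which the whole plan rests.

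The patches do not close this gap. The Knaster--Tarski observation is sound as far as it goes (justifiedness of the abstract $\Delta$ does make the family $(\gamma_*(\mathcal{U}_g))_g$ a concrete pre-fixpoint, whence $\{w\mid f\downarrow w\}=\textit{lfp}(\Phi)_f\subseteq\gamma_*(\mathcal{U})$), but it recovers only the terminating half of the conclusion and produces no infinitary derivation; the nonterminating half --- the actual content of the theorem --- is exactly what you defer to ``checking the B\"uchi analogue of Lemma~\ref{lem-b}'', and discharging that requires replaying the entire $\eta_n$/greatest-fixpoint construction from the proof of Theorem~\ref{thm-a} at the level of concretized abstract effects. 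At that point you are re-proving Theorem~\ref{thm-a} by direct induction on the B\"uchi derivation rather than reducing to it --- which is, in effect, what the paper's own (very terse) proof amounts to: a replay of the induction using $U\subseteq\gamma_*(\alpha_*(U))$, $V\subseteq\gamma_{\leq\omega}(\alpha_{\leq\omega}(V))$ and the lax commutation of $\gamma$ with $\cdot$, $\cup$, $(-)^*$, $(-)^\omega$. If you insist on a genuine reduction, concretize not at $\gamma_*(\mathcal{U})$ but at the \emph{exact} bindings $B_f=\{w\mid f\downarrow w\}$ from Theorem~\ref{thm-ca}: these are honest fixpoints, so the infinitary recursion rule applies literally; then compare the canonical derivation's effects coefficient-wise against $\gamma_*(\mathcal{U})$ and $\gamma_{\leq\omega}(\mathcal{V}(\mathfrak{X})_\eta)$, using your Knaster--Tarski argument for the $U$-parts and the hypothesis $\eta\models_{\mathfrak{A}}\Delta$ for the variables. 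Alternatively, add a subsumption rule to Figure~\ref{fig-e} and re-verify Theorem~\ref{thm-a} for the enlarged system; then your concretization lemma would go through as written.
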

\begin{proof} It follows from the Galois connections in Lemma \ref{lem-galois}. In particular, $U\subseteq \gamma_*(\alpha_*(U))$ and $V\subseteq \gamma_{\leq\omega}(\alpha_{\leq\omega}(V))$. \end{proof}
\begin{theorem}[Completeness]
\label{thm-cb}
Given a non-primitive procedure  $f$, let $T(f)$ be the set of all traces generated by $f$. There is a derivation
$\vdash_\mathfrak{A} f\ \&\ (\mathcal{U}_f, \mathcal{V}_f)$
such that
$T(f) \subseteq L(\mathfrak{A})$ if and only if $\gamma_*(\mathcal{U}_f) \cup \gamma_{\leq\omega}(\mathcal{V}_f)\subseteq L(\mathfrak{A})$.
\end{theorem}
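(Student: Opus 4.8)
The plan is to reduce everything to the concrete completeness theorem (Theorem~\ref{thm-ca}) together with the abstraction-preservation properties (Theorem~\ref{prese} and Table~\ref{tab-a}) and the completeness-of-abstraction lemma (Lemma~\ref{compless}). Write $U_f:=\{w\mid f\downarrow w\}$ and $V_f:=\{w\mid f\uparrow w\}$, so that by definition of the trace semantics $T(f)=U_f\cup V_f$, with $U_f\subseteq\Sigma^*$ and $V_f\subseteq\Sigma^{\leq\omega}$. By Theorem~\ref{thm-ca} the concrete judgement $\vdash f\ \&\ (U_f,V_f)$ is derivable. First I would lift this concrete derivation to a derivation in the B\"uchi system producing the effect $(\mathcal{U}_f,\mathcal{V}_f):=(\alpha_*(U_f),\alpha_{\leq\omega}(V_f))$, and then settle the stated equivalence by applying Lemma~\ref{compless} separately to the terminating and nonterminating components.

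For the lifting I would mirror, step for step on the abstract side, the construction used to prove Theorem~\ref{thm-ca}. Recall that there one fixes variables $X_f$, forms the monotone operator $\Phi$ on $\mathcal{P}(\Sigma^*)^{\mathcal{F}}$ induced by applying every rule but the last, takes its least fixpoint $\vec B$, obtains the justified environment $\Delta(\vec B)$, and then applies the recursion rule repeatedly to reach $\vdash f\ \&\ (U_f,V_f)$. On the abstract side there is an exactly analogous operator $F$ on $\mathcal{M}_*^{\mathcal{F}}$. The preservation entries of Table~\ref{tab-a} say precisely that each abstract rule computes the $\alpha$-image of the effect computed by the corresponding concrete rule: $\alpha_*$ commutes with concatenation and union, $\alpha_{\leq\omega}(U\cdot V)=\alpha_*(U)\cdot\alpha_{\leq\omega}(V)$, and $\alpha_{\leq\omega}(U^\omega)=\alpha_*(U)^\omega$. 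Consequently $\alpha_*\circ\Phi=F\circ\alpha_*$, and the least-fixpoint clause of Table~\ref{tab-a} gives $\alpha_*(\mathit{lfp}(\Phi))=\mathit{lfp}(F)$, so the abstract least fixpoint is $\alpha_*(\vec B)$ and the abstract environment is justified. A final induction on the repeated applications of the recursion rule---using the concatenation and $(-)^\omega$ preservation laws to handle the $\mathcal{A}^*\cdot\mathcal{V}\cup\mathcal{A}^\omega$ shape of its conclusion---then yields the abstract derivation $\vdash_\mathfrak{A} f\ \&\ (\alpha_*(U_f),\alpha_{\leq\omega}(V_f))$, as desired.

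With this derivation in hand the equivalence is immediate. Since $\mathcal{U}_f=\alpha_*(U_f)$ with $U_f\subseteq\Sigma^*$, Lemma~\ref{compless} (case $p=*$) gives $\gamma_*(\mathcal{U}_f)=\gamma_*(\alpha_*(U_f))\subseteq L(\mathfrak{A})$ iff $U_f\subseteq L(\mathfrak{A})$; likewise, since $\mathcal{V}_f=\alpha_{\leq\omega}(V_f)$ with $V_f\subseteq\Sigma^{\leq\omega}$, the case $p=\leq\omega$ gives $\gamma_{\leq\omega}(\mathcal{V}_f)\subseteq L(\mathfrak{A})$ iff $V_f\subseteq L(\mathfrak{A})$. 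A union is contained in $L(\mathfrak{A})$ iff both of its parts are, so $\gamma_*(\mathcal{U}_f)\cup\gamma_{\leq\omega}(\mathcal{V}_f)\subseteq L(\mathfrak{A})$ iff $U_f\subseteq L(\mathfrak{A})$ and $V_f\subseteq L(\mathfrak{A})$, i.e.\ iff $T(f)=U_f\cup V_f\subseteq L(\mathfrak{A})$.

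The routine part is the bookkeeping for concatenation, union and least fixpoints, which are folklore or already recorded in Table~\ref{tab-a}. The real work---and the step I expect to be the main obstacle---is the guarantee that the abstract recursion rule reproduces the $\alpha$-image of the concrete $\omega$-iteration effect, i.e.\ the preservation law $\alpha_{\leq\omega}(U^\omega)=\alpha_*(U)^\omega$. This is exactly the nontrivial clause of Theorem~\ref{prese}, whose proof rests on the combinatorial Lemma~\ref{krit} and the Ramsey-based covering property of Lemma~\ref{lem-d}. Once that single equation is available, the lifting goes through mechanically and Lemma~\ref{compless} closes the argument.
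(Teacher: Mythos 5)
Your proposal is correct and follows essentially the same route as the paper's own proof: both lift the concrete completeness construction of Theorem~\ref{thm-ca} through the commuting square $\alpha_*\circ\Phi=F\circ\alpha_*$ and the least-fixpoint clause of Table~\ref{tab-a}, then apply the recursion rule with the $(-)^\omega$-preservation law of Theorem~\ref{prese} to reach $\vdash_\mathfrak{A} f\ \&\ (\alpha_*(U_f),\alpha_{\leq\omega}(V_f))$, and close with Lemma~\ref{compless}. The only difference is that you spell out the final equivalence componentwise, which the paper leaves implicit.
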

\begin{proof}
Recall the monotone operator $\Phi$ from the proof of Theorem~\ref{thm-ca}. Since $\Phi$ is built up from concatenation and union there is an abstract operator $F$ such that $\alpha_*\,\circ\,\Phi= F\,\circ\,\alpha_*$. Thus $\alpha_*(\textit{lfp}(\Phi))=\textit{lfp}(F)$ and therefore, the judgements $\Delta(\alpha(\vec B))\vdash_\mathfrak{A} e_f\ \&\ (\alpha_*(B_f),\alpha_*(V(\mathfrak X)))$ (again in keeping with the notation of that proof) are derivable in the B\"uchi type system. Using the preservation of $(-)^\omega$ repeatedly, we then obtain the judgements $\vdash_\mathfrak{A} f\ \&\ (\alpha_*(U_f),\alpha_{\leq\omega}(V_f))$ where $U_f=\{w\mid f\downarrow w\}$ and $V_f=\{w\mid f\uparrow w\}$. Letting $\mathcal{U}_f=\alpha_*(U_f)$ and $\mathcal{V}_f=\alpha_{\leq\omega}(V_f)$ the claim then follows using Lemma~\ref{compless}. 
\end{proof}
\subsection{Type inference and complexity}\label{compl}
Given that the abstract lattices and thus the set of types is finite,
type inference is a standard application of well-known techniques. We
therefore just sketch it here to give an idea of the complexity.

From a given program we can construct in linear time a skeleton typing
derivation for the finitary effect annotations. The skeleton typing
derivation contains variables in place of actual effect annotations;
the number of these variables is linear in the program size. The
side conditions of the typing rules then become constraints on these
variables and any solution will yield a valid typing derivation. In
quadratic time (assuming that $\mathcal{M}_{\leq\omega}$ has constant size) we
can then compute the least solution of these constraints using the
usual iteration algorithms known from abstract interpretation. Once we
have in this way obtained the finitary effect annotations we can then
(in linear time) derive the infinitary ones using the $(-)^\omega$ and
infinitary concatenation operators on $\mathcal{M}_{\leq\omega}$.

Once the type of an expression has been found one can then check (in constant time) whether the language denoted by it is accepted by the policy automaton. 

If we are interested in complexity as a function of the size of the
policy automaton the situation is of course different. The important
parameter here is the size of the abstract lattices since the number
of iterations as well as the runtime of the algorithms for computing
the abstractions of concatenation, union, infinite iteration are
linear in this parameter. If $n$ is the number of states of the policy
automaton then the number of classes can be bounded by $2^{2n^2}$
since each class is characterised by two sets of pairs of states. The
resulting exponential in $n$ runtime of our algorithms is no surprise
since the PSPACE-complete problem of universality of B\"uchi automata
is easily reduced to type checking. We believe that by clever space
management our algorithms can be implemented in polynomial space but
we have not verified this.

On a positive note we remark that for a small policy automaton the set
of classes is manageable as we see in the examples below. We also note
that once the classes have been computed and the abstract functions
tabulated one can then analyse many programs of arbitrary size.

\section{Conclusions}
We have developed a type-and-effect system for capturing possibly
infinite traces of recursively defined first-order procedures. The
type-and-effect system is sound and complete with respect to inclusion of traces
in a given B\"uchi (``policy'') automaton. The effect annotations are
from a finite set that can be effectively computed from the B\"uchi
automaton. Type inference using constraint solving is thus possible. 
We emphasize that the resulting ability to decide satisfaction of temporal properties of traces is not claimed as a new result here; since it has long been known in the context of model checking. The novelty lies in the presentation as a type and effect system that follows the standard pattern of such systems. As we explain below, this opens the way for smooth integration with existing type-theoretic technology. 

The proofs of soundness and completeness are organised in a modular
fashion and decomposed into a type-theoretic part expressed in the
form of an infinitary system (Section~\ref{sec-e}) and a
lattice-theoretic part (Section~\ref{ba}). Concretely, this Section
defines an abstract domain from any given B\"uchi automaton and
derives crucial properties of this abstraction. We consider a
contribution of independent interest. The finite part of this abstraction, i.e., $\mathcal{M}_*$, is akin to the abstract domain proposed by Cousot et al \cite{cousot:dblp:c13} which also has finite abstraction values and its abstraction function preserves the least fixed point as well. The infinite part $\mathcal{M}_{\leq\omega}$ is a new abstract domain with its abstraction function preserving not only  least fixed points but also the new operator $(-)^\omega$. We remark here that the abstraction function does not preserve 
greatest fixed points so that the introduction of the $(-)^\omega$ operator on the abstract domain is a necessary device. This extension makes the B\"uchi type and effect system powerful enough to capture and reason about infinitary properties like liveness and fairness.

We have sketched a combination of our simple type system with a generic type and effect system for class-based object-oriented  languages. Other possible 
 extensions are in the direction of effectful functional
programming. The standard notation for type-and-effect systems as
described e.g.\ in Henglein and Niss' survey \cite{attapl} could be
used for our effect system mutatis mutandis leading in particular to
function types like $w\stackrel{\epsilon}{\rightarrow}w'$ where
$w,w'$ are types and $\epsilon$ is a B\"uchi effect (element of
our abstract lattice) describing the latent effect of a
function. Assuming that we only allow first-order recursive
definitions the design of such a type system would be completely
standard. For higher-order recursion some extra technical work would
be needed to lift the last rule from Fig.~\ref{fig-e} and its
corresponding abstraction to this case. 

It is this option of integration with expressive type systems that
makes our abstraction so attractive and superior (in this context!) to
classical  methods based on model checking. 





\begin{thebibliography}{10}

\bibitem{Abadi99}
Mart\'{\i}n Abadi, Anindya Banerjee, Nevin Heintze, and Jon~G. Riecke.
\newblock A core calculus of dependency.
\newblock {\em POPL}, pages
  147--160. ACM, 1999.

\bibitem{DBLP:conf/concur/AbdullaCCHHMV11}
Parosh~Aziz Abdulla, et al. 
\newblock Advanced ramsey-based b{\"u}chi automata inclusion testing.
\newblock {\em CONCUR}, LNCS 6901, pp 187--202.
  Springer, 2011.

\bibitem{DBLP:conf/tlca/AehligMO05}
Klaus Aehlig, Jolie~G. de~Miranda, and C.-H.~Luke Ong.
\newblock The monadic second order theory of trees given by arbitrary level-two
  recursion schemes is decidable.
\newblock {\em TLCA}, LNCS 3461 pp 39--54. Springer, 2005.

\bibitem{Alpern87}
Bowen Alpen and Fred~B. Schneider.
\newblock Recognizing safety and liveness.
\newblock {\em Distributed Computing}, 2(3):117--126, 1987.

\bibitem{comlan}
Lennart Beringer, Robert Grabowski, and Martin Hofmann.
\newblock Verifying pointer and string analyses with region type systems.
\newblock COMLAN, 2013.

\bibitem{DBLP:conf/concur/BouajjaniEM97}
Ahmed Bouajjani, et al.
\newblock Reachability analysis of pushdown automata: Application to
  model-checking.
\newblock  {\em
  CONCUR}, LNCS 1243, pp 
  135--150. Springer, 1997.

\bibitem{buc62}
J.~R. B{\"u}chi.
\newblock On a decision method in restricted second order arithmetic.
\newblock In {\em Logic, Method, and Philosophy of Science},
  Stanford University Press, 1962.

\bibitem{DBLP:journals/tcs/BurkartS99}
Olaf Burkart and Bernhard Steffen.
\newblock Model checking the full modal mu-calculus for infinite sequential
  processes.
\newblock {\em TCS}, 221(1-2):251--270, 1999.

\bibitem{Clarke00}
Edmund~M. Clarke, et al. 
\newblock Counterexample-guided abstraction refinement.
\newblock {\em CAV}, LNCS 1855, pages 154--169. Springer,
  2000.

\bibitem{cousot}
Patrick Cousot and Radhia Cousot.
\newblock Abstract interpretation.
\newblock 
  {\em POPL}, pages 238--252. ACM, 1977.

\bibitem{cousotj}
Patrick Cousot and Radhia Cousot.
\newblock Abstract interpretation frameworks.
\newblock {\em J. Log. Comput.}, 2(4):511--547, 1992.

\bibitem{CousotC92}
Patrick Cousot and Radhia Cousot.
\newblock Inductive definitions, semantics and abstract interpretation.
\newblock In Ravi Sethi, editor, {\em POPL}, pages 83--94. ACM Press, 1992.

\bibitem{cousot:dblp:c13}
Patrick Cousot and Radhia Cousot.
\newblock Formal language, grammar and set-constraint-based program analysis by
  abstract interpretation.
\newblock In {\em FPCA}, pages 170--181, 1995.

\bibitem{martin:c35}
Christian Dax, Martin Hofmann, and Martin Lange.
\newblock A proof system for the linear time $\mu$-calculus.
\newblock {\em FSTTCS}, LNCS 4337, pages 273--284. Springer, 2006.

\bibitem{Emerson80}
E.~Allen Emerson and Edmund~M. Clarke.
\newblock Characterizing correctness properties of parallel programs using
  fixpoints.
\newblock {\em ICALP}, LNCS 85, pp 169--181.
  Springer, 1980.

\bibitem{schwoon2}
Javier Esparza, et al
\newblock Efficient algorithms for model checking pushdown systems.
\newblock {\em CAV}, LNCS 1855, pp 232--247. Springer,
  2000.

\bibitem{DBLP:journals/corr/abs-1110-6183}
Seth Fogarty and Moshe~Y. Vardi.
\newblock B{\"u}chi complementation and size-change termination.
\newblock {\em Logical Methods in Computer Science}, 8(1), 2012.

\bibitem{martin}
Robert Grabowski et al. 
\newblock Type-based enforcement of secure programming guidelines - code
  injection prevention at sap.
\newblock {\em
  FAST}, LNCS 7140, pp 182--197. Springer, 2011.

\bibitem{jones}
M. Heizmann, N. Jones, and A. Podelski.
\newblock Size-change termination and transition invariants.
\newblock {\em SAS}, LNCS 6337, pp 22--50. Springer, 2010.

\bibitem{Hofmann06}
Martin Hofmann and Steffen Jost.
\newblock Type-based amortised heap-space analysis.
\newblock {\em ESOP}, LNCS 3924, pp 22--37. Springer, 2006.

\bibitem{Hofmann13}
Martin Hofmann and Dulma Rodriguez.
\newblock Automatic type inference for amortised heap-space analysis.
\newblock {\em ESOP},
  LNCS 7792, pp 593--613.
  Springer, 2013.

\bibitem{alan.jeffrey:dblp:c33}
Alan Jeffrey.
\newblock Ltl types frp: linear-time temporal logic propositions as types,
  proofs as functional reactive programs.
\newblock {\em PLPV}, pages 49--60.
  ACM, 2012.

\bibitem{DBLP:conf/fossacs/KnapikNU02}
Teodor Knapik, Damian Niwinski, and Pawel Urzyczyn.
\newblock Higher-order pushdown trees are easy.
\newblock {\em FoSSaCS}, LNCS 2303  pp 205--222. Springer,
  2002.

\bibitem{DBLP:conf/lics/KobayashiO09}
Naoki Kobayashi and C.-H.~Luke Ong.
\newblock A type system equivalent to the modal mu-calculus model checking of
 HORS.
\newblock In {\em LICS}, pages 179--188. IEEE, 2009.

\bibitem{Luc88}
John~M. Lucassen and David~K. Gifford.
\newblock Polymorphic effect systems.
\newblock In Jeanne Ferrante and P.~Mager, editors, {\em POPL}, pages 47--57.
  ACM Press, 1988.

\bibitem{McMillan93}
Kenneth~L. McMillan.
\newblock {\em Symbolic model checking}.
\newblock Kluwer, 1993.

\bibitem{Mossin97}
Christian Mossin.
\newblock Higher-order value flow graphs.
\newblock {\em
  PLILP}, LNCS 1292, pp
  159--173. Springer, 1997.

\bibitem{Nielson05}
Flemming Nielson, Hanne~Riis Nielson, and Chris Hankin.
\newblock {\em Principles of program analysis (2. corr. print)}.
\newblock Springer, 2005.

\bibitem{attapl}
Benjamin~C. Pierce.
\newblock {\em Advanced Topics in Types and Programming Languages}.
\newblock The MIT Press, 2004.

\bibitem{schwoon}
Stefan Schwoon.
\newblock {\em Model-Checking Pushdown Systems}.
\newblock PhD thesis, Technische Universit{\"a}t M{\"u}nchen, 2002.

\bibitem{vardi}
A.~Prasad Sistla, et al. 
\newblock The complementation problem for b{\"u}chi automata with appplications
  to temporal logic.
\newblock {\em Theor. Comput. Sci.}, 49:217--237, 1987.

\bibitem{Thiemann98}
Peter Thiemann.
\newblock Formalizing resource allocation in a compiler.
\newblock {\em Types in
  Compilation}, LNCS 1473, 
  pp.\ 178--193. Springer, 1998.

\bibitem{Tho-languages-al}
W.~Thomas.
\newblock Languages, automata and logic.
\newblock In A.~Salomaa and G.~Rozenberg, editors, {\em Handbook of Formal
  Languages}, vol 3. Springer, 1997.

\bibitem{DBLP:conf/cav/Walukiewicz96}
Igor Walukiewicz.
\newblock Pushdown processes: Games and model checking.
\newblock {\em CAV}, LNCS
  1102, pp 62--74. Springer,
  1996.

\end{thebibliography}




\appendix
\section{Examples}
\begin{example}
Consider the following definition:
\begin{eqnarray*}
f\ =\ o(b)\ ;\ o(a)\ ;\ f
\end{eqnarray*}
Suppose that we want to verify the property:
\begin{quote}
{\it Every finite trace generated by $f$ ends with $b$. Every infinite trace generated by $f$ contains infinite many $b$s.}
\end{quote}
We can use the following extended B\"{u}chi automaton $\mathfrak{A}$ to formalize this property. 
\begin{eqnarray*}
\mathfrak{A} : \xymatrix{ \ar[r] & *++[o][F]{0} \ar@(ur,ul)_{a}\ar@/^/[r]^b  & 
*++[o][F=]{1} \ar@(ur,ul)_{b}\ar@/^/[l]^a}
\end{eqnarray*}
We have:
\begin{eqnarray*}
L(\mathfrak{A}) = (a^*b^+)^+ \cup (a^*b)^\omega\ .
\end{eqnarray*}
By the definition of $\sim$, we have that $\mathcal{Q}$ consists of
four equivalence classes: the set of empty word ($[\epsilon] =
\{\epsilon\}$), the set of non-empty words consists of $a$ ($[a] =
a^+$), the set of words ending with $a$ and containing at least one
$b$ ($[ba] = (a+b)^*a - a^+ = (a+b)^*b(a+b)^*a$), the set of words
ending with $b$ ($[b] = (a+b)^*b$).
Further, the set $\mathcal{C} = \{(C,D)\mid C,D\in \mathcal{Q}\wedge CD=C, DD = D\}$ is as follows:
\begin{eqnarray*}
& & \{
([\epsilon], [\epsilon]),
([a], [\epsilon]),
([a], [a]),
([b], [\epsilon]),
\\
& & \ \ 
([b], [b]),
([ba], [\epsilon]),
([ba], [a]),
([ba], [ba])
\}
\end{eqnarray*}
and abstractions of $L(\mathfrak{A})$ are:
\begin{eqnarray*}
& & \mathcal{U}_L = \{[b]\}\in \mathcal{M}_*\\
& & \mathcal{V}_L =
\{
([b],[\epsilon]),
([b],[b]),
([ba],[ba])
\}\in \mathcal{M}_{\leq\omega}\ .
\end{eqnarray*}
By using the B\"uchi type and effect system defined in Figure \ref{fig-f}, we have that $(\mathcal{U}, \mathcal{V}) = (\emptyset, (\{[ba]\})^\omega)$ is a B\"uchi effect of $f$. Further, 
\begin{eqnarray*}
\mathcal{V} = \alpha_{\leq\omega}((\gamma_{*}(\{[ba]\}))^\omega)
&=& \{([b], [b]), ([ba], [ba])\}\ .
\end{eqnarray*}
So, $T(e_f) \subseteq \gamma_{\leq\omega}(\mathcal{V})\subseteq \gamma_{\leq\omega}(\mathcal{V}_L)$.
That is, all traces generated by $f$ satisfy the target property. This coincides with our observation that $f$ does not generate any finite traces and the only infinite trace generated by $f$ is $(ba)^\omega$ which contains infinite many $b$s. 
\end{example}
\begin{example}
Consider the following C-like program: 
\begin{verbatim}
0   #define TIMEOUT 65536 
1   while (true) { 
2     i = 0;
3     while (i++ < TIMEOUT && s != 0) {
4       unsigned int s = auth(); /* o(a); */
5     }  /* o(c); */
6     work(); /* o(b); */
7   }
\end{verbatim}
We would like to verify that line 6 is executed infinitely often under the fairness assumption that the while loop 3 always terminates. To this end, we can annotate the above program by uncommenting the event-issuing commands
and abstract the so annotated program as the definition: 
\begin{eqnarray*}
& & f\ =\ g\,;\,o(b)\,;\,f\\
& & g\ =\ (o(a)\,;\,g)\ ?\ o(c) 
\end{eqnarray*}
We are then interested in the property ``infinitely many $b$'' assuming that ``infinitely often $c$'' (fairness) or equivalently: ``infinitely many $b$ or finitely many $c$.'' This property can be readily expressed as the following B\"uchi automaton 
$$
\xymatrix{&&\ar[d]\\\mathfrak{A} : &
*++[o][F=]{2}\ar@(ur,ul)_{a,b}&*++[o][F]{0}\ar[l]_{a,b,c}\ar@(dr,dl)^{a,b,c}\ar@/^/[r]^b & *++[o][F=]{1} \ar@/^/[l]^{a,b,c}&}
$$
By the definition of $\sim$, we have that the set $\mathcal{Q}$ consists of the following equivalence classes:
\begin{eqnarray*}
& & [\epsilon] = \{\epsilon\}\qquad [a] = \{a\}\qquad [b] = \{b\}\qquad [c] = \{c\} \\
& & [aa] = a^+a \qquad\qquad \qquad \quad [ba] = (a+b)^+a - [aa]\\
& & [ab] = a^+b \qquad\qquad \qquad \quad\,[bb] = (a+b)^+b - [ab]\\
& & [cb] = (a+c)^*c(a+c)^*b\\
& & [bcb] = (a+b+c)^*c(a+b+c)^*b - [cb]\\
& & [cca] = (a+c)^+c\ \cup\ (a+c)^* c (a+c)^*a\\
& & [bca] = (a+b+c)^+c\ \cup\ \\ 
& & \qquad\quad\ \ (a+b+c)^* c (a+b+c)^*a - [cca]\ .
\end{eqnarray*}
Further, the set $\mathcal{C}$ consists of the following pairs:
\begin{eqnarray*}
& & ([\epsilon],[\epsilon])\ \ 
\quad\quad([a],[\epsilon])\ \ 
o\quad\quad([b],[\epsilon])\ \ 
\quad\ \ \ \, ([c],[\epsilon])
\\
& & ([aa],[\epsilon])\ \ 
\quad \ \, ([ba],[\epsilon])\ \ 
\quad \ \, ([ab],[\epsilon])\ \ 
\quad \ \ ([bb],[\epsilon])
\\
& & ([cb],[\epsilon])\quad\ \ \,\ \ 
([bcb],[\epsilon])\quad\ \ 
([cca],[\epsilon])\quad\, \ \ 
([bca],[\epsilon])
\\
& & ([aa],[aa])\ \ \ \ \ 
([ba],[aa])\ \ \,\ \ 
([ba],[ba])\ \ \ \ \,\ \ 
([bb],[bb]) 
\\
& & ([bcb],[bb])\ \ \ \ \ 
([bcb],[bcb])\,\ \ 
([cca],[aa])\ \, \ \ 
([cca],[cca])\, 
\\
& &([bca],[aa])\ \ \ \ 
([bca],[ba])\ \ \ 
([bca],[cca])\  \ \ 
([bca],[bca])
\end{eqnarray*}
Then, the abstractions of $L(\mathfrak{A})$ are as follows:
\begin{eqnarray*}
& & \mathcal{U}_L = \mathcal{Q} - \{[\epsilon]\}\in \mathcal{M}_*\\
& & \mathcal{V}_L =
\mathcal{C}-
\{
([\epsilon],[\epsilon]), ([cca],[cca]), ([bca],[cca])
\}\in \mathcal{M}_{\leq\omega}\ .
\end{eqnarray*}
By using the B\"uchi type and effect system, we get the effect of $g$ as the pair:
$$(\mathcal{U}_g, \mathcal{V}_g) 
= (
\{[c], [cca]\},
\{
([aa], [aa])\}
)\ .$$ 
Then, the effect of $f$ is the pair $(\mathcal{U}_f, \mathcal{V}_f)$ given as follows: 
$$(\emptyset,\{
([aa],[aa]),
([bca],[aa]),
([bcb],[bcb]),
([bca],[bca])
\})\ .$$ 
Since $\mathcal{U}_f\subseteq\mathcal{U}_L$ and $\mathcal{V}_f\subseteq\mathcal{V}_L$, we have that the program satisfies the property. 
\end{example}

\section{Proofs}

\label{a-p}
\begin{proof}
(of Theorem \ref{thm-a})
We  proceed by induction on the structure of the type and effect system. The only interesting case is that for the last rule in Figure \ref{fig-e}. Let us show that if
\begin{eqnarray*}
f \uparrow w
\mbox{\quad and \quad} 
\Delta \vdash f\ \&\ (U, A^*\cdot V(\mathfrak{X}-X)\cup A^\omega)\ ,
\end{eqnarray*}
then $w$ is in $A^*\cdot V(\eta)\cup A^\omega$. We introduce the assignment functions:
\begin{eqnarray*}
\eta_n = \left\{ 
\begin{array}{ll}
\eta[X\mapsto \Sigma^{\leq\omega}] & \mbox{\quad if\ \ $n = 0$}\ ;\\
\eta[X\mapsto A \cdot X_{\eta_{n-1}} \cup\ V(\eta)]& \mbox{\quad if\ \ $n\geq 1$}\ . 
\end{array}
\right.
\end{eqnarray*}
From Lemma \ref{lem-b}, $\eta_0\models \Delta, f\ \&\ (U,X)$. Assume that $\eta_n\models \Delta, f\ \&\ (U,X)$. By inductive hypothesis, we have that for all $e_f\uparrow w$, $w$ is in 
\begin{eqnarray*}
A\cdot X_{\eta_n}\ \cup\ V(\mathfrak{X}-X)_{\eta_n} = A\cdot X_{\eta_n}\ \cup\ V(\mathfrak{X}-X)_{\eta} = X_{\eta_{n+1}}\ .
\end{eqnarray*}
Further, by Lemma \ref{lem-b}, $\eta_{n+1}\models \Delta, f\ \&\ (U,X)$. By mathematical induction, we have that $\eta_n\models \Delta, f\ \&\ (U,X)$ for all natural numbers $n$. Define $\eta_\omega$ as 
\begin{eqnarray*}
\eta[X\mapsto \bigcap_{n = 0}^{\infty}X_{\eta_n}]\ .
\end{eqnarray*}
We get $\eta_\omega\models \Delta, f\ \&\ (U,X)$. Notice that $\eta_\omega(X)$ is equal to $A^*\cdot V(\eta)\ \cup\ A^\omega$. By the definition of $\models$, $w$ is in $A^*\cdot V(\eta)\cup A^{\omega}$.
\end{proof}
\begin{proof}
(of Lemma \ref{lem-d})
a) and b) are obvious from the definition. Property c) coincides with b) when $D=\{\epsilon\}$. Otherwise, let $w,w'\in CD^\omega$. Decompose $w=w_1w_2w_3\dots$ and $w'=w_1'w_2'w_3'\dots$ so that $w_1,w_1'\in C$ and $w_i,w_i'\in D$ for $i>1$. 
We have $w_i \sim w_i'$ for all $i$ so any accepting run for $w$ yields an accepting run for $w'$ by the definition of $\sim$. 
Property d) is again trivial when $w\in\Sigma^*$ (choose $C=[w]$ and $D=\{\epsilon\}$) and otherwise appears already in B\"uchi's work. For the record, write $w=a_1a_2a_3\dots $ with $a_i\in\Sigma$ and ``colour'' the set $\{i,j\}$ with $i<j$ with the $\sim$-class of $a_ia_{i+1}\dots a_{j-1}$. By Ramsey's theorem there exists an infinite set of indices $i_1<i_2<i_3\dots$ and a class $D$ so that $[a_{i_k}\dots a_{i_{k+1}-1}]=D$ for all $k>0$. The claim follows with $C:=[a_1\dots a_{i_2-1}]$. 
\end{proof}
\begin{proof}
(Proof of Lemma \ref{lem-in})
  This is trivial for $\mathcal{M}_*$ which is a powerset lattice. As for
  $\mathcal{M}_{\leq\omega}$, one must show that unions and
  intersections of closed sets are again closed. So, let
  $(\mathcal{V}_i)_{i\in I}$ be a family of closed sets. To argue that
  the union of this family is closed, suppose that $UV^\omega\cap
  CD^\omega\neq\emptyset$ for some $(C,D)$ contained in that
  union. Then, $(C,D)\in\mathcal{V}_i$ for some $i$, so
  $(U,V)\in\mathcal{V}_i$. Since $\mathcal{V}_i$ is closed,
  $(U,V)$ is contained in the union. As for the intersection, suppose
  that $UV^\omega\cap CD^\omega\neq\emptyset$ for some $(C,D)$
  contained in the intersection. Then, $(C,D)\in\mathcal{V}_i$ for all
  $i$, so $(U,V)\in\mathcal{V}_i$ for all $i$. Since each
  $\mathcal{V}_i$ is closed, $(U,V)$ is contained in the
  intersection, too.
\end{proof}
\begin{proof}
(Proof of Lemma \ref{lem-con})
We need to show that  $\mathcal{U}\cdot\mathcal{V}$ is closed so suppose that 
$UV^\omega\cap ACD^\omega\neq\emptyset$ where $A\in \mathcal{U}$ and $(C,D)\in\mathcal{V}$. We may assume that $V\neq[\epsilon]$ for otherwise the claim is trivial. Decomposing the witnessing word, we get finite words $x,y$ where
 $xy\in U$ and $x\in A$. Note that $UV=U$.
 Thus, $U=AY$ with $Y$ the class of $y$. As a result, $YV^\omega\cap CD^\omega\neq\emptyset$, so $(YV,V)\in\mathcal{V}$ by closedness and finally, $(U,V)\in \mathcal{U}\cdot\mathcal{V}$ since $U=AYV$. 
\end{proof}

\section{Region-Based B\"uchi Type and Effect System}
\label{a-os}
In order to make the B\"uchi type and effect system
given in Section \ref{bet} more functional and effective in programming practices,
by integrating with 
region types \cite{Luc88,attapl,comlan}, 
we extend it to a region-based B\"uchi type and effect system
for Featherweight Java with field update.
Based on this, the future goal is to
develop and implement a powerful type system for Java-like languages
in which ($\omega$)-regular properties of traces
can be properly characterized for verification purposes.

\subsection{Syntax}
The syntax of an expression $e$ in Featherweight Java with field update 
is given as follows:
\[x \in Var\quad f\in Fld\quad m \in Mtd \quad C,D \in Cls\]
\vskip-0.5cm
\begin{eqnarray*}
e &::=& {\bf o}(a)\ |\ null\ |\ x\ |\ {\bf new}\ C\ |\ x.f\ |\ x.f := y\ |\\
& & x.m(\overline{y})\ |\ {\bf let}\ x = e_1\ {\bf in}\ e_2\ |\ {\bf if}\ x = y\ {\bf then}\ e_1\ {\bf else}\ e_2
\end{eqnarray*}
For the sake of simplicity, we omit primitive types and casting and 
assume that every expression is in let normal form.
In the definition of the {\bf if}-{\bf then}-{\bf else} expression,
the expression $x = y$ denotes an unusual judgement between objects
which is independent on booleans.
The notation $\overline{y}$ denotes a sequence of variables.

Additionally, the expression $\mathbf{o}$ is used to 
produce appealing events.
It is a global primitive procedure which
is not part of Featherweight Java with filed update.
It is added as annotations to programs
for the purposes of property characterization.

Let $\preceq\,\in\,\mathcal{P}(Cls\times Cls)$
be the subclass relation between classes. 
Let $fields \in Cls\to \mathcal{P}(Fld)$ and
$methods\in Cls \to \mathcal{P}(Mtd)$
be mappings from a class to its fields and methods respectively.
We use $mtable\in Cls\times Mtd \rightharpoonup \overline{Var}\times Expr$ to denote the method table 
which assigns to each method its definition, i.e.,
its formal parameters (a sequence of variables) and
its body (an expression).
With these definitions, a program $P$ is given as follows: 
\[P = (\preceq, fields, methods, mtable)\]
Usual well-formedness conditions on methods are assumed to ensure
the inheritance relation between same methods from different classes.

\subsection{Operational Semantics}
The operational semantics of an expression $e$ are given in form of:
\begin{center}
\framebox[\width]{\quad $(s,h)\vdash e \Downarrow v, h'\ \&\ w$\quad
\par
\quad $(s,h)\vdash e \Uparrow\ \&\ w$\quad
}
\end{center}
We use $e\Downarrow v$ to denote that the evaluation of $e$ terminates and the value is $v$.
The notation $e\Uparrow\ $ means that the evaluation of $e$ doesn't terminate.
A value $v\in Val$ of an expression is a location $l\in Loc$ or $null$.
A state $(s,h)$ is consisted of 
a stack $s\in Var \rightharpoonup Val$ which is a partial function assigning to each variable a value 
and a heap $h\in Loc \rightharpoonup Cls\times (Fld\rightharpoonup Val)$ which is a partial function assigning to each location a pair of a class and
values assigned to fields of this class. 
In addition, as side effects, 
an infinite or a finite trace $w$ which is generated from the set $\Sigma$ of events
by ordinary concatenations
is attached to each evaluation. 
The whole operational semantics is given as follows:
\[
\inferrule*[Left={Prim-a}]
{ }
{(s,h)\vdash {\bf o}(a) \Downarrow null, h\ \&\ a}
\]\[
\inferrule*[Left={Null}]
{ }
{(s,h)\vdash null \Downarrow null, h\ \&\ \epsilon}
\qquad\qquad\qquad 
\inferrule*[Left={Var}]
{ }
{(s,h)\vdash x \Downarrow s(x), h\ \&\ \epsilon}
\]\[
\inferrule*[Left={New}]
{l \not\in dom(h)\qquad F = [f\mapsto null]_{f\in fields(C)}}
{(s,h)\vdash {\bf new}\ C \Downarrow l, h[l\mapsto (C,F)]\ \&\ \epsilon}
\]\[
\inferrule*[Left={Get}]
{s(x) = l\qquad h(l) = (C,F)}
{(s,h)\vdash x.f \Downarrow F(f), h\ \&\ \epsilon}
\]\[
\inferrule*[Left={Set}]
{s(x) = l\qquad h(l) = (C,F)\\\\ 
h' = h[l\mapsto (C, F[f\mapsto s(y)])]}
{(s,h)\vdash x.f := y \Downarrow s(y), h'\ \&\ \epsilon}
\]\[\inferrule*[Left={Call-a}]
{s(x) = l\qquad h(l) = (C,F)\\\\
mtable(C,m) = (\overline{x}, e) \qquad |\overline{x}| = |\overline{y}| = n\\\\
s' = [this\mapsto l]\cup[x_i\mapsto s(y_i)]_{i\in\{1,2,\dots,n\}}\\\\
(s',h)\vdash e \Downarrow v, h'\ \&\ w}
{(s,h)\vdash x.m(\overline{y}) \Downarrow v, h'\ \&\ w}
\]\[
\inferrule*[Left={Let-a}]
{(s,h)\vdash e_1 \Downarrow v_1, h_1\ \&\ w_1\\\\
(s[x\mapsto v_1],h_1)\vdash e_2 \Downarrow v_2, h_2\ \&\ w_2}
{(s,h)\vdash {\bf let}\ x = e_1\ {\bf in}\ e_2 \Downarrow v_2, h_2\ \&\ w_1\cdot w_2}
\]\[
\inferrule*[Left={If-a}]
{s(x) = s(y)\qquad (s,h)\vdash e_1 \Downarrow v, h'\ \&\ w}
{(s,h)\vdash {\bf if}\ x = y\ {\bf then}\ e_1\ {\bf else}\ e_2 \Downarrow v, h'\ \&\ w}
\]\[\inferrule*[Left={If-b}]
{s(x) \neq s(y)\qquad (s,h)\vdash e_2 \Downarrow v, h'\ \&\ w}
{(s,h)\vdash {\bf if}\ x = y\ {\bf then}\ e_1\ {\bf else}\ e_2 \Downarrow v, h'\ \&\ w}
\]\[
\inferrule*[Left={Epsilon}]
{ }
{(s,h)\vdash e \Uparrow \ \&\ \epsilon}
\qquad\qquad\qquad
\inferrule*[Left={Prim-b}]
{ }
{(s,h)\vdash {\bf o}(a) \Uparrow \ \&\ a}
\]\[
\inferrule*[Left={Call-b}]
{s(x) = l\qquad h(l) = (C,F)\\\\
mtable(C,m) = (\overline{x}, e) \qquad |\overline{x}| = |\overline{y}| = n\\\\
s' = [this\mapsto l]\cup[x_i\mapsto s(y_i)]_{i\in\{1,2,\dots,n\}}\\\\
(s',h)\vdash e \Uparrow \ \&\ w}
{(s,h)\vdash x.m(\overline{y}) \Uparrow\ \&\ w}
\]\[
\inferrule*[Left={Let-b}]
{(s,h)\vdash e_1 \Downarrow v_1, h_1\ \&\ w_1\\\\
(s[x\mapsto v_1],h_1)\vdash e_2 \Uparrow \ \&\ w_2}
{(s,h)\vdash {\bf let}\ x = e_1\ {\bf in}\ e_2 \Uparrow \ \&\ w_1\cdot w_2}
\]\[
\inferrule*[Left={Let-c}]
{(s,h)\vdash e_1 \Uparrow \ \&\ w}
{(s,h)\vdash {\bf let}\ x = e_1\ {\bf in}\ e_2 \Uparrow \ \&\ w}
\]\[
\inferrule*[Left={If-c}]
{s(x) = s(y)\qquad (s,h)\vdash e_1 \Uparrow \ \&\ w}
{(s,h)\vdash {\bf if}\ x = y\ {\bf then}\ e_1\ {\bf else}\ e_2 \Uparrow \ \&\ w}
\]\[
\inferrule*[Left={If-d}]
{s(x) \neq s(y)\qquad (s,h)\vdash e_2 \Uparrow\ \&\ w}
{(s,h)\vdash {\bf if}\ x = y\ {\bf then}\ e_1\ {\bf else}\ e_2 \Uparrow \ \&\ w}
\]
Equipped with the rule {\sc Epsilon},
all finite prefixes of an infinite trace produced 
by a non-terminate evaluation are captured. 
As for other rules, they are defined in the usual way.

\subsection{Region-Based B\"uchi Type and Effect System}
We now sketch an integration of the  B\"uchi type and effect system with
the region type system for Java given by Beringer et al \cite{comlan}. 
Let us first explain why this integration is interesting and useful
by an example.
Considering the following fragment of Java-like code:
\begin{verbatim}
   class C { 
     void f (String arg);
   }
\end{verbatim}
It could be refined using two different regions \texttt{r} and \texttt{r'}  with B\"uchi effects 
\texttt{(U,V)} and \texttt{(U',V')} respectively as follows:
\begin{verbatim}
   class C@r { 
     void f (String@X arg) & (U,V); 
   }
   class C@r' { 
     void f (String@X' arg) & (U',V'); 
   }
\end{verbatim}
Then, an object \texttt{o} typed \texttt{C@r} expects a \texttt{String@X} as
argument to \texttt{f} and \texttt{o.f()} will exhibit a \texttt{(U,V)} effect. An
object \texttt{o1} typed \texttt{C@r'} expects a \texttt{String@X'} as
argument to \texttt{f} and \texttt{o1.f()} will exhibit a \texttt{(U',V')} effect. In this particular case, regions denote locations at which effects are produced.

Generally, a region $r\in Reg$ is a static abstraction of concrete locations
which can be considered as a set of concrete locations.
A class type $C$ can then be equipped with a set $R$ of regions,
yielding a refined type $C_R$ that places the constraint that its members belong to one of the regions in $R$. We summarize these definitions and introduce new varaiabls as follows: 
$$R, S\in \mathcal{P}(Reg)\quad
C_R, \tau, \sigma\in(Cls\times\mathcal{P}(Reg)) \uplus \{unit\} = Typ$$
Here, the $unit$ type is introduced for 
typing the expression $\mathbf{o}(a)$.
It is easy to define the subtype relation between region-based types:
$C_R <: C'_{R'}$ if and only if $C \preceq C' \wedge R \subseteq R'$.
and to extend this definition to sequences of types as follows:
\[\overline{\sigma} <: \overline{\sigma}'\quad\Leftrightarrow\quad |\overline{\sigma}| = |\overline{\sigma}'|\ \ \wedge\ \ \forall i\in |\overline{\sigma}|\,.\,\sigma_i <: \sigma'_i\]

With respect to the subtype relation, the following field typings: 
\[A^{set}, A^{get} \in Cls \times Reg \times Fld \to Typ\]
assign to each field in each region-annotated class respectively a set-type
which is a contravariant type for data written to the field and
a get-type which is a covariant type for data read from the field.
The following well-formedness conditions on $A^{set}$ and $A^{get}$ are imposed:
\[A^{set}(C,r,f) <: A^{get}(C,r,f)\]
and if $D\preceq C$, then 
\[A^{set}(C,r,f) <: A^{set}(D,r,f)
\ \  \wedge\ \ A^{get}(D,r,f) <: A^{get}(C,r,f)\]

Given a B\"uchi automaton $\mathfrak{A}$, 
let $\mathcal{M}_*$ and $\mathcal{M}_{\leq\omega}$ 
be the B\"uchi abstractions 
for $\Sigma^*$ and $\Sigma^{\leq\omega}$ respectively, 
as  defined in Section \ref{ba}. We define effects as: 
$Ef\!f = \mathcal{M}_*\times\mathcal{M}_{\leq\omega}$.
Then, the following typing rule:
\[M \in Cls \times Reg \times Fld \to \overline{Typ}\times Typ \times Ef\!f\]
assigns to each method in each region-annotated class a functional type
with an effect. The subtype relation 
\[\overline{\sigma}\stackrel{(\mathcal{U}, \mathcal{V})}{\longrightarrow} \tau\quad <:\quad \overline{\sigma}'\stackrel{(\mathcal{U}', \mathcal{V}')}{\longrightarrow} \tau'\] 
between effect annotated functional types is defined as: 
\[\overline{\sigma}' <: \overline{\sigma}\ \wedge\ \tau <: \tau'\ \wedge\ \mathcal{U}\subseteq \mathcal{U}'\ \wedge\ \mathcal{V}\subseteq \mathcal{V}'\]
That is, the input type is contravariant and the output type is covariant.
The well-formedness condition on $M$ is:
\[D \preceq C\quad\Rightarrow \quad M(D,r,m) <: M(C,r,m)\]
for all classes, regions, and methods.

With the above definitions, combinined with the type and effect system given in Section \ref{ba}, we arrive at the region-based B\"uchi type and effect system as follows: 
\[
\inferrule*[Left={T-Sub}]
{\Gamma \vdash_\mathfrak{A} e: \tau \ \&\ (\mathcal{U}, \mathcal{V}(\mathfrak{X})) \\\\ 
\tau <: \tau'\qquad\qquad
\mathcal{U}\subseteq \mathcal{U'}\qquad\qquad
\mathcal{V}(\mathfrak{X}) \sqsubseteq \mathcal{V'}(\mathfrak{X})}
{\Gamma \vdash_\mathfrak{A} e:\tau' \ \&\ (\mathcal{U'}, \mathcal{V'}(\mathfrak{X}))}
\]\[
\inferrule*[Left={T-Prim}]
{ }
{\Gamma \vdash_\mathfrak{A} {\bf o}(a) : unit \ \&\ (\alpha_*(\{a\}), \emptyset)}
\]\[
\inferrule*[Left={T-Null}]
{ }
{\Gamma \vdash_\mathfrak{A} null : C_\emptyset \ \&\ (\alpha_*(\{\epsilon\}), \emptyset)}
\]\[\inferrule*[Left={T-Var}]
{ }
{\Gamma, x:\tau \vdash_\mathfrak{A} x:\tau \ \&\ (\alpha_*(\{\epsilon\}), \emptyset)}
\]\[
\inferrule*[Left={T-New}]
{ }
{\Gamma \vdash_\mathfrak{A} {\bf new}\ C : C_{\{r\}} \ \&\ (\alpha_*(\{\epsilon\}), \emptyset)}
\]\[
\inferrule*[Left={T-Get}]
{ \forall r\in R\,.\,A^{get}(C,r,f) <: \tau}
{\Gamma, x: C_R \vdash_\mathfrak{A} x.f : \tau \ \&\ (\alpha_*(\{\epsilon\}), \emptyset)}
\]\[
\inferrule*[Left={T-Set}]
{ \forall r\in R\,.\,\tau <: A^{set}(C,r,f)}
{\Gamma, x: C_R, y: \tau \vdash_\mathfrak{A} x.f := y : \tau \ \&\ (\alpha_*(\{\epsilon\}), \emptyset)}
\]\[
\inferrule*[Left={T-Call}]
{ \overline{\sigma}_r\stackrel{(\mathcal{U}_r, \mathcal{V}_r)}{\longrightarrow} \tau_r = M(C,r,m) \\\\
\forall r\in R\,.\, \overline{\sigma}_r\stackrel{(\mathcal{U}_r, \mathcal{V}_r)}{\longrightarrow} \tau_r\ \ <:\ \ \overline{\sigma}\stackrel{(\mathcal{U}, \mathcal{V})}{\longrightarrow} \tau}
{\Gamma, x : C_R, \overline{y}:\overline{\sigma} \vdash_\mathfrak{A} x.m(\overline{y}) : \tau \ \&\ (\mathcal{U}, \cup_{r\in R}\{X_r\})}
\]\[
\inferrule*[Left={T-Let}]
{\Gamma \vdash_\mathfrak{A} e_1 : \tau_1\ \&\ (\mathcal{U}_1, \mathcal{V}_1(\mathfrak{X}))\\\\
\Gamma, x:\tau_1 \vdash_\mathfrak{A} e_2 : \tau_2\ \&\ (\mathcal{U}_2, \mathcal{V}_2(\mathfrak{X}))}
{\Gamma \vdash_\mathfrak{A} {\bf let}\ x = e_1\ {\bf in}\ e_2 : \tau_2 \ \&\ (\mathcal{U}_1\cdot \mathcal{U}_2, \mathcal{V}_1(\mathfrak{X})\cup \mathcal{U}_1\cdot \mathcal{V}_2(\mathfrak{X}))}
\]\[
\inferrule*[Left={T-If}]
{\Gamma, x:C_{R\cap S}, y: D_{R\cap S} \vdash_\mathfrak{A} e_1 : \tau\ \&\ (\mathcal{U}_1, \mathcal{V}_1(\mathfrak{X}))\\\\
\Gamma, x:C_R, y:D_S \vdash_\mathfrak{A} e_2 : \tau\ \&\ (\mathcal{U}_2, \mathcal{V}_2(\mathfrak{X}))}
{\Gamma, x:C_R, y:D_S \vdash_\mathfrak{A} \\
{\bf if}\ x = y\ {\bf then}\ e_1\ {\bf else}\  e_2 : \tau \ \&\ (\mathcal{U}_1\cup \mathcal{U}_2, \mathcal{V}_1(\mathfrak{X})\cup \mathcal{V}_2(\mathfrak{X}))}
\]
The typing judgement for expressions $e$ is
\begin{center}
\framebox[\width]{\quad 
$\Gamma \vdash_\mathfrak{A} e:\tau \ \&\ (\mathcal{U}, \mathcal{V}(\mathfrak{X}))$
}
\end{center}
with $\Gamma$ the type environment, $\mathfrak{A}$ the policy B\"uchi automaton, $\tau$ the type of $e$, $\mathcal{U}$ the set of finite traces, and $\mathcal{V}(\mathfrak{X})$ the expression for infinite traces.
Among the typing rules, is the following rule:
\begin{mathpar}
\inferrule*[Left={T-Sub}]
{\Gamma \vdash_\mathfrak{A} e: \tau \ \&\ (\mathcal{U}, \mathcal{V}(\mathfrak{X})) \\\\ 
\tau <: \tau'\qquad\qquad
\mathcal{U}\subseteq \mathcal{U'}\qquad\qquad
\mathcal{V}(\mathfrak{X}) \sqsubseteq \mathcal{V'}(\mathfrak{X})}
{\Gamma \vdash_\mathfrak{A} e:\tau' \ \&\ (\mathcal{U'}, \mathcal{V'}(\mathfrak{X}))}
\end{mathpar}
where
\[\mathcal{V}(\mathfrak{X}) \sqsubseteq \mathcal{V}'(\mathfrak{X}) \quad\Leftrightarrow\quad \forall \eta\in \mathfrak{X} \to \mathcal{M}_{\leq\omega}\,.\, \mathcal{V}(\eta) \subseteq \mathcal{V}'(\eta)\]
Notice that in Section \ref{ba}, there are two typing rules for  function calls. That is, one rule is used to directly get the effect if the effect has been assumed in the environment and another rule is used to derive the effect with an effect assumption added into the environment. However, in order to integrate with region-based type systems, in our region-based B\"uchi type and effect system, we only use the following rule: 
\begin{mathpar}
\inferrule*[Left={T-Call}]
{ \overline{\sigma}_r\stackrel{(\mathcal{U}_r, \mathcal{V}_r)}{\longrightarrow} \tau_r = M(C,r,m) \\\\
\forall r\in R\,.\, \overline{\sigma}_r\stackrel{(\mathcal{U}_r, \mathcal{V}_r)}{\longrightarrow} \tau_r\ \ <:\ \ \overline{\sigma}\stackrel{(\mathcal{U}, \mathcal{V})}{\longrightarrow} \tau}
{\Gamma, x : C_R, \overline{y}:\overline{\sigma} \vdash_\mathfrak{A} x.m(\overline{y}) : \tau \ \&\ (\mathcal{U}, \cup_{r\in R}\{X_r\})}
\end{mathpar}
That is, the set $\mathcal{U}$ of finite traces
is taken from the declarations of finite traces in $M$.
As for the set of infinite traces,
we only put a set of placeholders $X_r$.

Further, a program $P$ is well-typed if and only if for all classes $C$, regions $r$, and methods $m$ such that  
\[mtable(C,r,m) = (\overline{x}, e)\ \wedge\ M(C,r,m) = \overline{\sigma}_r\stackrel{(\mathcal{U}_r, \mathcal{V}_r)}{\longrightarrow} \tau_r\]
the following typing:
\[this: C_{\{r\}}, \overline{x}:\overline{\sigma}\vdash e:\tau\ \&\ (\mathcal{U}, \mathcal{A}_r\cdot X_r\cup \mathcal{V}(\mathfrak{X}-\{X_r\}))\]
is derivable and there is an assignment $\eta : \mathfrak{X}\to\mathcal{M}_{\leq \omega}$
satisfying:
\[\eta(X_r) = \mathcal{A}_r^*\cdot \mathcal{V}(\eta) \cup \mathcal{A}_r^\omega\ \wedge\ \mathcal{V}_r\supseteq \eta(X_r)\]
That is, a program is well-typed if and only if the constraints produced by the region-based B\"uchi type and effect system are satisfiable with respect to region-based type and effect declarations for all classes, all regions and all methods defined in this program.

\end{document}